\theoremstyle{plain}
\newtheorem{thm}{Theorem}
\newtheorem{lemma}{Lemma}
\theoremstyle{definition}
\begin{document}

%\preprint{APS/123-QED}

\title{Exploiting degeneracy to construct good ternary quantum error correcting code}% Force line breaks with \\

\author{Ritajit Majumdar} 
\email{majumdar.ritajit@gmail.com}
\author{Susmita Sur-Kolay}
\email{ssk@isical.ac.in}

\affiliation{Advanced Computing \& Microelectronics Unit, Indian Statistical Institute, India}

%\date{\today}% It is always \today, today,
             %  but any date may be explicitly specified

\begin{abstract}
Quantum error-correcting code for higher dimensional systems can, in general, be directly constructed from the codes for qubit systems. What remains unknown is whether there exist efficient code design techniques for higher dimensional systems. In this paper, we propose a 7-qutrit error-correcting code for the ternary quantum system and show that this design formulation has no equivalence in qubit systems. This code is optimum in the number of qutrits required to correct a single error while maintaining the CSS structure. This degenerate CSS code can (i) correct up to seven simultaneous phase errors and a single bit error, (ii) correct two simultaneous bit errors on pre-defined pairs of qutrits on eighteen out of twenty-one possible pairs, and (iii) in terms of the cost of implementation, the depth of the circuit of this code is only two more than that of the ternary Steane code. Our proposed code shows that it is possible to design better codes explicitly for ternary quantum systems instead of simply carrying over codes from binary quantum systems.
\end{abstract}

%\pacs{Valid PACS appear here}% PACS, the Physics and Astronomy
                             % Classification Scheme.
\keywords{Quantum Error Correction, Ternary Quantum System, Multi-valued Logic, Degenerate Code, CSS Code}%Use showkeys class option if keyword
                              %display desired
\maketitle

%\tableofcontents

\section{\label{sec:intro}Introduction}

The design of a large-scale, general-purpose quantum computer is hindered by error. The interaction of qubits with the environment readily destroys the information content of the state. As a remedy, quantum error correction techniques have been proposed \cite{PhysRevA.52.R2493, PhysRevLett.77.793, PhysRevLett.77.198, gottesman1997stabilizer, kitaev2003fault} which encodes the information of $k$ qubits into $n ~(> k)$ qubits to correct errors. However, the fidelity of a quantum circuit decays exponentially with the depth of the circuit \cite{arute2019quantum}. Therefore, quantum error-correcting code (QECC) with higher circuit depth not only reduces the speed of computation but also has the potential to incorporate further errors. The requirement for QECC construction is, therefore, to reduce the number of qubits required for encoding, as well as reduce the depth of the QECC circuit.

Consider two linear classical codes $C_1 = [n,k_1,d_1]$ and $C_2 = [n,k_2,d_2]$ such that $C_2^{\perp} \subseteq C_1$ and $k_2 < k_1$. The parity check matrices of $C_1$ and $C_2$ can be combined to construct an $[[n,k_1 - k_2, min\{d_1,d_2\}]]$ QECC \cite{PhysRevA.54.1098,PhysRevLett.77.793}, called the CSS code. The parity check matrix of $C_1 (C_2)$ forms the stabilizers for bit error correction, while that of $C_2 (C_1)$ forms the stabilizers for phase error correction. Therefore, the stabilizers of a CSS code can be partitioned into two sets such that the non-identity operators in the two partitions are different. CSS codes usually have low depth circuit as compared to non-CSS codes \cite{devitt2013quantum,majumdar2017method}. For the rest of this paper, we shall concentrate on CSS codes only.

Quantum systems are inherently multi-valued, and multi-valued quantum computers can outperform their binary counterparts in certain cryptographic protocols \cite{PhysRevLett.85.3313}, search algorithms \cite{wong2015grover}, and machine learning \cite{adhikary2020supervised}. QECC for multi-valued systems can be directly carried over from the QECC construction for qubit systems \cite{PhysRevA.55.R839, PhysRevA.97.052302}. To the best of our knowledge, it is not known whether it is possible to explicitly construct more efficient QECC for a particular higher-dimensional quantum system, instead of using the construction technique from some lower-dimensional system.

%\subsection*{Main Contribution}
%\vspace{-1.3cm}
A ternary quantum system (or qutrit) is the simplest multi-valued system. The 7-qubit Steane code \cite{PhysRevLett.77.793} can be carried over to the ternary system, which is the optimal CSS QECC in the number of qutrits that corrects a single error \cite{majumdar2020approximate}. In this paper we propose a 7-qutrit degenerate CSS QECC which is (i) optimal in the number of qutrits (for CSS type code), (ii) can correct up to seven simultaneous phase errors, (iii) can correct a single bit error, (iv) can be readily designed to correct two simultaneous bit errors on a predefined pair of qutrits (on 18 out of 21 pairs), and (v) the depth of the circuit of our QECC is only two more than that of the ternary Steane code. Its ability to correct multiple phase errors and two simultaneous bit errors on pre-defined qutrit pairs without a significant increase in the depth of the circuit makes our proposed QECC a strong candidate for error correction in the ternary quantum system. We also show that our proposed construction has no equivalence in the qubit system in order to construct a linear degenerate 7-qubit QECC. Therefore, our QECC implies that there exist design techniques of QECC for ternary quantum systems which are more efficient than ternary QECCs which are carried over from binary QECCs.

The rest of the paper is organized as follows - Section II gives a brief description of quantum error correction using stabilizers followed by the construction of the circuit of ternary Steane code in Section III. In Section IV we introduce our QECC and describe phase and bit error correction in detail. Section V compares the circuit cost of our QECC to the ternary Steane code. We conclude in Section VI.

\section{Preliminaries of quantum error correction}

\subsection{Error model and degeneracy}
The error model considered in this paper
\begin{equation} \label{eq:model}
\mathcal{E} = \delta\mathbb{I}_3 + \sum_{i = 1}^2\eta_iZ_i + \sum_{j=1}^{2}(\mu_jX_j + \sum_{i,j}\xi_{ij}Y_{ij})
\end{equation}
where $\mathbb{I}_3$ is the $(3 \times 3)$ identity matrix,
\begin{eqnarray*}
    Z_i\ket{\psi} &=& \alpha\ket{0} + \omega^i\beta\ket{1} + \omega^{2i}\gamma\ket{2}\\
    X_i\ket{\psi} &=& \alpha\ket{0+i} + \beta\ket{1+i} + \gamma\ket{2+i}\\
     Y_{ij} &= &X_iZ_j
\end{eqnarray*}
%and $Y_{ij} = X_iZ_j$,
for $i,j \in \{1,2\}$, $\delta,\eta,\mu,\xi \in \mathbb{C}$, spans the $(3 \times 3)$ operator space \cite{majumdar2019near}. $X_i$ and $Z_i$ are termed as bit and phase errors respectively.

For binary quantum systems, a set of mutually commuting operators $S_1, \hdots, S_m \subset$ \{$I$, $\pm \sigma_x$, $\pm i\sigma_x$, $\pm \sigma_y$, $\pm i\sigma_y$, $\pm \sigma_z$, $\pm i\sigma_z$\}$^{\otimes n}$, where each $\sigma_i$ is a Pauli operator \cite{nielsen2002quantum}, is said to stabilize a quantum state $\ket{\psi}$ if \cite{gottesman1997stabilizer}:

\begin{enumerate}
        \item ${\forall i}$, $S_i\ket{\psi} = \ket{\psi}$, $1\le i \le m$;
        \item ${\forall e} \in \mathcal{E}$, $\exists$ $j$, $S_j(e\ket{\psi}) = -(e\ket{\psi})$  $1 \leq j \leq m$;
        \item for $e,e' \in \mathcal{E}$, $e \neq e'$, $\exists$ $j,k$ $S_j(e\ket{\psi}) \neq S_k(e'\ket{\psi})$, $1 \leq j, k \leq m$;
    %\item $\forall i, j$, $[S_i, S_j] = 0$, $1\le i,j \le m$.
\end{enumerate}

Weight of a stabilizer $S_i$ ($wt(S_i)$) is defined as the number of non-identity operators in $S_i$. For an $[[n,k,d]]$ QECC, let S be the set of stabilizers and $\mathcal{O}$ be the set of all n-qubit operators. The distance $d$ of the QECC is defined as
\begin{center}
    $min \{wt(i)$ $|$ $i \in \mathcal{O} \setminus S$ and $[i,S_j] = 0$ $\forall$ $S_j \in S\}$
\end{center}

A distance $d$ QECC can correct upto $\lfloor\frac{d}{2}\rfloor$ errors. A QECC is said to be degenerate if $\exists$ $e, e' \in \mathcal{E}$, $e \neq e'$, where $\mathcal{E}$ is the set of all correctable errors, such that for a codeword $\ket{\psi}$, $e\ket{\psi} = e'\ket{\psi} = \ket{\phi}$. For such a code, it is not necessary to distinguish between $e$ and $e'$ as long as the error state $\ket{\phi}$ can be uniquely identified.

In accordance to the error model of Eq.~\ref{eq:model}, the operators which form the stabilizers for ternary quantum system are
\begin{center}
    $X_1\ket{j} = \ket{j+1}$ mod $3$ \quad $X_2\ket{j} = \ket{j+2}$ mod $3$\\
    $Z_1\ket{j} = \omega^j\ket{j}$ \quad $Z_2\ket{j} = \omega^{2j}\ket{j}$
\end{center}
where $j \in \{0,1,2\}$, and $\omega^3 = 1$. It can be noted that
\begin{center}
    $X_2 = X_1X_1$ \quad $Z_2 = Z_1Z_1$.
\end{center}

The ternary stabilizers for any $[[n,k,d]]_3$ QECC will be $n$-fold tensor products of $\{I,X_1,X_2,Z_1,Z_2\}$ \cite{majumdar2020approximate}.

\begin{lemma}
\label{eq:l1}
$[X_i \otimes X_j, Z_k \otimes Z_l]$ = 0 if and only if 
\begin{center}
    $\begin{cases}
    i = j~ \text{when}~ k \neq l,\\
    i \neq j~ \text{when}~ k = l.
\end{cases}$
\end{center}
\end{lemma}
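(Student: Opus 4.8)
The plan is to collapse the tensor‑product commutator to a single scalar phase using the Weyl‑type commutation relation between the shift and clock operators, and then settle the claim by a short case analysis modulo $3$. Since $X_1,X_2$ and $Z_1,Z_2$ are exactly the ternary analogues of the Pauli $X$ and $Z$, one expects $[X_i\otimes X_j,\,Z_k\otimes Z_l]=0$ to hold precisely when an exponent of $\omega$ vanishes mod $3$, and the whole content of the lemma is to identify which pairs $(i,j,k,l)\in\{1,2\}^4$ make that exponent zero.

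First I would record the elementary single‑qutrit relation. From $X_i\ket{m}=\ket{m+i}$ (mod $3$) and $Z_k\ket{m}=\omega^{km}\ket{m}$ one computes, on a basis state, $Z_kX_i\ket{m}=\omega^{k(m+i)}\ket{m+i}$ while $X_iZ_k\ket{m}=\omega^{km}\ket{m+i}$, so $Z_kX_i=\omega^{ki}X_iZ_k$ for all $i,k\in\{1,2\}$. Next I would lift this to two qutrits: because $(A\otimes B)(C\otimes D)=AC\otimes BD$, we get $(Z_k\otimes Z_l)(X_i\otimes X_j)=Z_kX_i\otimes Z_lX_j=\omega^{ki+lj}\,(X_i\otimes X_j)(Z_k\otimes Z_l)$, hence the two operators commute if and only if $ki+lj\equiv 0\pmod 3$. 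The final step is the case split. Both $1$ and $2$ are invertible mod $3$, and for $i,j\in\{1,2\}$ one has $i+j\in\{2,3,4\}$, so $i+j\equiv 0\pmod 3$ iff $i\neq j$. If $k=l$, then $ki+lj=k(i+j)$ vanishes mod $3$ iff $i+j\equiv 0$, i.e. iff $i\neq j$; since we are in the regime $k=l$ this matches the clause "$i\neq j$ when $k=l$". If $k\neq l$, then $\{k,l\}=\{1,2\}$: when $i=j$ the exponent is $(k+l)i=3i\equiv 0$, so they commute, and when $i\neq j$ the two remaining assignments give $ki+lj\in\{4,5\}\equiv\{1,2\}\pmod 3$, which is nonzero, so they do not commute — matching the clause "$i=j$ when $k\neq l$". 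Assembling both directions of the two cases yields exactly the stated biconditional.

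I do not expect a genuine obstacle here; the proof is essentially bookkeeping. The only points that need care are fixing the phase convention in $Z_kX_i=\omega^{ki}X_iZ_k$ (so the exponent is $ki+lj$ and not, say, $-(ki+lj)$, which is immaterial for vanishing mod $3$ but worth stating cleanly), and checking that the $k\neq l$ subcase with $i\neq j$ is verified over all inequivalent assignments rather than a single representative. Everything else follows from the invertibility of $1$ and $2$ modulo $3$.
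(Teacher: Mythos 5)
Your proof is correct and follows essentially the same route as the paper: commute $X$ past $Z$ on each qutrit factor, collect the resulting scalar phases, and check when their product is unity. The only difference is bookkeeping — you package the per-qutrit relation as the single formula $Z_kX_i=\omega^{ki}X_iZ_k$ and reduce to the condition $ki+lj\equiv 0\pmod 3$, whereas the paper records the two special cases $X_iZ_i=\omega^2Z_iX_i$ and $X_iZ_j=\omega Z_jX_i$ ($i\neq j$) and enumerates directly; both arguments are complete.
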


\begin{proof}
It is easy to verify that 
\begin{equation}
\label{eq:commute}
    X_i Z_i = \omega^2 Z_i X_i, i \in \{1,2\}
\end{equation}

From Eq.~\ref{eq:commute}, the following can be derived
\begin{center}
    {\small $X_1 Z_2 = X_1 Z_1 Z_1 = \omega^2 Z_1 X_1 Z_1 = \omega^2 Z_1 (\omega^2 Z_1 X_1) = \omega Z_2 X_1$}\\
    {\small $X_2 Z_1 = X_1 X_1 Z_1 = X_1 (\omega^2 Z_1 X_1) = \omega^2 \omega^2 Z_1 X_1 X_1 = \omega Z_1 X_2$}
\end{center}

In summary,
\begin{equation}
\label{eq:commute2}
    X_i Z_j = \omega Z_j X_i, i,j \in \{1,2\}, i \neq j
\end{equation}

Now consider $(X_i \otimes X_j)(Z_k \otimes Z_l)$ if either $i=j, k \neq l$ or $i \neq j, k = l$, $i,j,k,l \in \{1,2\}$. Thus, either $i=k, j \neq l$ or $i \neq k, j=l$. From Eq.~\ref{eq:commute} and \ref{eq:commute2}, commutation on one of the qutrits yields $\omega^2$, and  on the other qutrit yresults in $\omega$, and hence the product is 1. Therefore, for such a scenario, $[X_i \otimes X_j, Z_k \otimes Z_l]$ = 0.

Conversely, consider $(X_i \otimes X_j)(Z_k \otimes Z_l)$ where $i=j$ and $k=l$. Then from Eq.~\ref{eq:commute} and \ref{eq:commute2}, if $i=k$ ($i \neq k$), then commutation on both the qutrits produce value $\omega^2$ ($\omega$), and hence the product is $\omega$ ($\omega^2$). Therefore, for such a scenario, $[X_i \otimes X_j, Z_k \otimes Z_l] \neq 0$.
\end{proof}

\section{Ternary Steane Code and its circuit}
In this section, we carry over the 7 qubit QECC by Steane \cite{PhysRevLett.77.793} to the ternary system and show its circuit implementation. The stabilizers for Steane code are
\begin{eqnarray*}
S_1 &=& I \otimes I \otimes I \otimes X \otimes X \otimes X \otimes X\\
S_2 &=& I \otimes X \otimes X \otimes I \otimes I \otimes X \otimes X\\
S_3 &=& X \otimes I \otimes X \otimes I \otimes X \otimes I \otimes X\\
S_4 &=& I \otimes I \otimes I \otimes Z \otimes Z \otimes Z \otimes Z\\
S_5 &=& I \otimes Z \otimes Z \otimes I \otimes I \otimes Z \otimes Z\\
S_6 &=& Z \otimes I \otimes Z \otimes I \otimes Z \otimes I \otimes Z
\end{eqnarray*}

X and Z correspond to $X_1$ and $Z_1$ respectively in ternary quantum systems \cite{gottesman1998fault}. Each Z operator is realized by a single CNOT gate, and X = H CNOT H, where H is the Hadamard gate. The ternary equivalent of CNOT gate is the $C+T$ \cite{PhysRevA.97.052302} gate, where
$$C+T: \sum\limits_{x,y \in \{0,1,2\}}\ket{x,(x+y)\%3}\bra{x,y}$$

Chrestenson basis \cite{Hurst1985-HURSTI-2} is the equivalent of Hadamard basis in ternary quantum system. Two conjugate Chrestenson bases $b_1$ and $b_2$ are defined as
\begin{eqnarray*}
\ket{+_i} & = & \frac{1}{\sqrt{3}}(\ket{0}+\ket{1}+\ket{2})\\
\ket{-_i} & = & \frac{1}{\sqrt{3}}(\ket{0}+\omega^i\ket{1}+\omega^{2i}\ket{2})\\
\ket{|_i} & = & \frac{1}{\sqrt{3}}(\ket{0}+\omega^{2i}\ket{1}+\omega^i\ket{2})
\end{eqnarray*}

The Chrestenson gates $Ch_1$ and $Ch_2$ convert a qutrit from computational basis to $b_1$ and $b_2$ respectively.
\begin{center}
	\begin{tabular}{ c  c}
		$Ch_1 = \frac{1}{\sqrt{3}}\begin{pmatrix}
		1 & 1 & 1\\
		1 & \omega & \omega^2\\
		1 & \omega^2 & \omega
		\end{pmatrix}$
		&
		$Ch_2 = \frac{1}{\sqrt{3}}\begin{pmatrix}
		1 & 1 & 1\\
		1 & \omega^2 & \omega\\
		1 & \omega & \omega^2
		\end{pmatrix}$
	\end{tabular}
\end{center}

One can verify that $Ch_1 Ch_2 = I$ and
\begin{center}
    $Ch_1 X_1 Ch_2 = Z_1$ \hspace{0.8cm} $Ch_1 X_2 Ch_2 = Z_2.$
\end{center}

The circuit for correcting a single bit error in a qutrit using the ternary Steane code is shown in Fig.~\ref{fig:steane}. The circuit for correcting bit errors follows from the stabilizers $S_4, S_5$, and $S_6$, and each $Z$ operator is realized using a $C+T$ gate. In the figure, $q_0$ to $q_6$ are data qutrits and the remaining are ancilla qutrits for syndrome measurement.

\begin{figure}
    \centering
    \caption{Circuit to correct a single bit error with ternary Steane code}
    \includegraphics[scale=0.26]{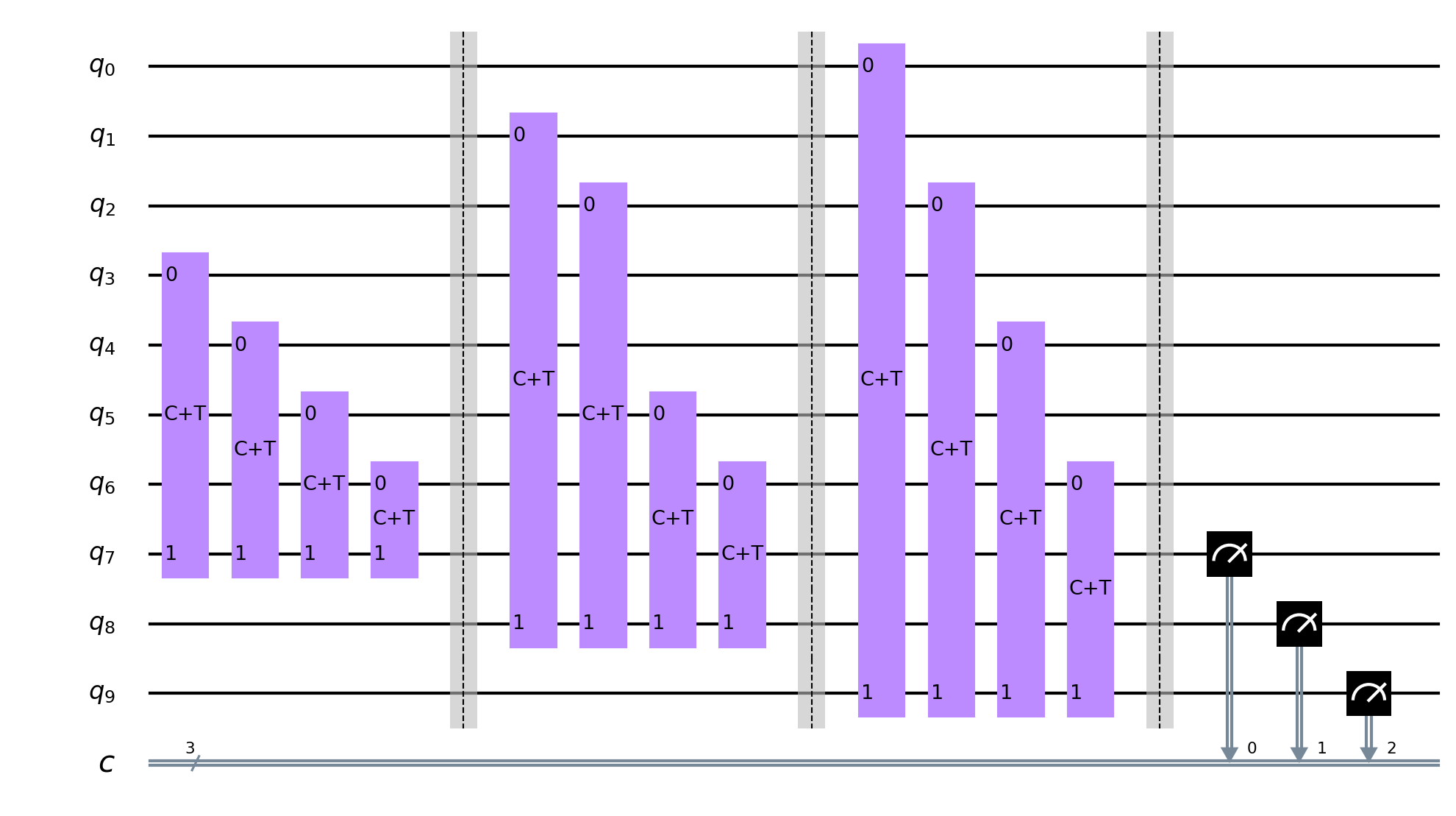}
    \label{fig:steane}
\end{figure}

To correct a single phase error, the qutrits must be converted to the Chrestenson basis. Therefore, the circuit for correcting a single phase error is similar to Fig.~\ref{fig:steane}, except that there will be a single $Ch_1$ gate at the beginning and a $Ch_2$ gate at the end of the circuit for each qutrit. Therefore, the total number of gates required to correct a single error using the ternary Steane code is 38 (24 $C+T$ gates and 14 Chrestenson gates). The depth of the circuit, defined as the maximum number of gates on any input to output path is 8, which is for $q_6$.

\section{7 qutrit degenerate QECC}
In our proposed 7 qutrit QECC, the information of a single qutrit $\ket{\psi}=\alpha\ket{0}+\beta\ket{1}+\gamma\ket{2}$ is encoded into seven qutrits as
$\ket{\psi}_L = \alpha\ket{0_L}+\beta\ket{1_L}+\gamma\ket{2_L}$, where
\begin{eqnarray*}
\ket{0_L} &=& \ket{0000000} + \ket{1020102} + \ket{2010201}\\
& + & \ket{0102010} + \ket{1122112} + \ket{2112211}\\
& + & \ket{0201020} + \ket{1221122} + \ket{2211221}\\
\ket{1_L} &=& \ket{1111111} + \ket{2101210} + \ket{0121012}\\
& + & \ket{1210121} + \ket{2200220} + \ket{0220022}\\
& + & \ket{1012101} + \ket{2002200} + \ket{0022002}\\
\ket{2_L} &=& \ket{2222222} + \ket{0212021} + \ket{1202120}\\
&+& \ket{2021202} + \ket{0011001} + \ket{1001100}\\
&+& \ket{2120212} + \ket{0110011} + \ket{1100110}
\end{eqnarray*}

This QECC is degenerate. For example, a single $Z_1$ error on the first and the fifth qutrits generate the same error state. We shall discuss the impact of degeneracy in error correction in the following subsection. This encoding scheme satisfies the necessary and sufficient conditions for error correction \cite{PhysRevLett.84.2525} which respectively states that
\begin{center}
    $\bra{0_L}\sigma \ket{0_L} = \bra{1_L}\sigma \ket{1_L} = \bra{2_L}\sigma \ket{2_L}$
\end{center}

for any error $\sigma \in \mathcal{E}$, and
\begin{center}
    $\bra{i_L}\sigma_m^{\dagger}\sigma_n\ket{j_L} = \delta_{ij}\alpha_{mn}$
\end{center}
for any errors $\sigma_m, \sigma_n \in \mathcal{E}$, where $i,j \in \{0,1,2\}$, $\alpha_{mn} \in \mathbb{C}$ and $\delta_{ij}$ is the Dirac-delta function.

\subsection{Correction of phase errors}
The stabilizers for correcting a single phase error are:
\begin{eqnarray*}
S_1 &=& X_1 \otimes I \otimes X_2 \otimes I \otimes X_1 \otimes I \otimes X_2\\
S_2 &=& I \otimes X_1 \otimes I \otimes X_2 \otimes I \otimes X_1 \otimes I
\end{eqnarray*}

The stabilizer $S_1$ and $S_2$ operate on two disjoint sets of qutrits. In other words, these two stabilizers partition the qutrits of the codeword in two sets, $g_1 = \{q_0,q_2,q_4,q_6\}$ and $g_2 = \{q_1,q_3,q_5\}$. The partitioning of the qutrits into two sets, and the action of $S_1$ and $S_2$ for phase errors occurring on qutrits of each set, are depicted in Table~\ref{tab:phase}.

\begin{table}[htb]
    \centering
    \caption{Partition of the qutrits into probable error subsets}
    \begin{tabular}{|c|c|c|c|c|}
    \hline
        & Type of error & $S_1$ & $S_2$ & Probable error qutrits\\
        \hline
        1 & \multirow{4}{*}{$Z_1$} & $\omega^2$ & 1 & $q_0$, $q_4$\\
        \cline{1-1} \cline{3-5}
        2 & & $\omega$ & 1 & $q_2$, $q_6$ \\
        \cline{1-1} \cline{3-5}
        3 & & 1 & $\omega^2$ & $q_1$, $q_5$ \\
        \cline{1-1} \cline{3-5}
        4 & & 1 & $\omega$ & $q_3$\\
        \hline
        5 & \multirow{4}{*}{$Z_2$} & $\omega$ & 1 & $q_0$, $q_4$ \\
        \cline{1-1} \cline{3-5}
        6 & & $\omega^2$ & 1 & $q_2$, $q_6$ \\
        \cline{1-1} \cline{3-5}
        7 & & 1 & $\omega$ & $q_1$, $q_5$ \\
        \cline{1-1} \cline{3-5}
        8 & & 1 & $\omega^2$ & $q_3$ \\
        \hline
    \end{tabular}
    \label{tab:phase}
\end{table}

It is easy to see that $\exists S \notin \{S_1,S_2\}$ such that $[S,S_i] = 0$, $i \in \{1,2\}$ and $wt(S) < 3$, which implies that the distance of this code is $< 3$. For example, the operator $Z_1 \otimes I \otimes Z_1 \otimes I \otimes I \otimes I \otimes I$ commutes with both the stabilizers, which implies that there exist phase errors on $q_0$ and $q_2$ which this code fails to distinguish. However, we show that it is possible to correct the error even without distinguishing them uniquely in certain cases.

Consider the error state $\ket{\Bar{\psi}}$ such that $S_1\ket{\Bar{\psi}} = \omega^2\ket{\Bar{\psi}}$. From Table~\ref{tab:phase}, we note that one cannot distinguish uniquely whether $Z_1$ error occurred on $q_0$, or $Z_2$ error occurred on $q_2$. However,
\begin{eqnarray*}
Z_1^0\ket{0_L} &=& \ket{0000000} + \omega\ket{1020102} + \omega^2\ket{2010201}\\
& + & \ket{0102010} + \omega\ket{1122112} + \omega^2\ket{2112211}\\
& + & \ket{0201020} + \omega\ket{1221122} + \omega^2\ket{2211221}\\
&=& Z_2^2\ket{0_L}
\end{eqnarray*}

where, $Z_i^j$ implies the error $Z_i$ acting on the $j^{th}$ physical qutrit, $i \in \{1,2\}$, $j \in \{0,1,\hdots,6\}$. $\ket{1_L}$ and $\ket{2_L}$ also show similar behavior (which is obvious since this QECC satisfies the necessary condition for error correction). Therefore, it is not necessary to distinguish between the two errors $Z_1^0$, and $Z_2^2$. Rather, when the stabilizer $S_1$ gives $\omega^2$ eigenvalue, correcting any one of the two errors is sufficient to correct the error on the codeword. Similar other scenarios are observable in Table~\ref{tab:phase}. Thus, although this QECC cannot distinguish between certain phase errors, it can still correct them perfectly.

We say that a stabilizer $S$ triggers for some error $E$, if for that error $E$ the eigenvalue of $S$ is not unity.

\begin{lemma}
\label{eq:group}
The proposed QECC can correct two simultaneous phase errors on two distinct qutrits $q_i$ and $q_j$, if these belong to two distinct sets defined above, i.e, $q_i \in g_1$ and $q_j \in g_2$.
\end{lemma}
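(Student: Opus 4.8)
The plan is to exploit the fact that the two phase-syndrome stabilizers act on disjoint registers, so that two errors landing in different registers produce independent syndromes that can be decoded separately.

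\textbf{Step 1 (decoupling of the syndrome).} I would first observe that $S_1$ carries the identity on positions $q_1,q_3,q_5$ and $S_2$ carries the identity on positions $q_0,q_2,q_4,q_6$. Hence a phase error $E_i$ acting on $q_i\in g_1$ commutes with $S_2$ (leaving its eigenvalue equal to $1$) and can only alter the eigenvalue of $S_1$; symmetrically, a phase error $E_j$ on $q_j\in g_2$ commutes with $S_1$ and affects only $S_2$. Consequently, for the joint error $E_iE_j$ the pair of measured eigenvalues $(\lambda_1,\lambda_2)$ of $(S_1,S_2)$ factorizes: $\lambda_1$ depends only on $E_i$ and $\lambda_2$ depends only on $E_j$.

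\textbf{Step 2 (decoding each register via degeneracy).} From Table~\ref{tab:phase}, every single phase error supported on $g_1$ forces $\lambda_1\in\{\omega,\omega^2\}$, and the value $\lambda_1=\omega^2$ (resp. $\lambda_1=\omega$) is produced by exactly the four errors $\{Z_1^0,Z_1^4,Z_2^2,Z_2^6\}$ (resp. $\{Z_1^2,Z_1^6,Z_2^0,Z_2^4\}$). The degeneracy identity already displayed, $Z_1^0\ket{0_L}=Z_2^2\ket{0_L}$, together with the analogous identities obtained by the same direct evaluation for $\ket{1_L},\ket{2_L}$ and for the remaining pair $Z_1^4,Z_2^6$, shows that all four errors in such a class act identically on every codeword. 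Hence, given $\lambda_1$, applying the recovery $R_i$ that undoes any one chosen representative of that class restores the $g_1$-part of the state. The identical argument applied to $S_2$ and $g_2$ shows that $\lambda_2$ determines a recovery $R_j$, supported on $g_2$, that undoes the $g_2$-part (here one also uses that, on $g_2$, $S_2=\omega^2$ corresponds to $\{Z_1^1,Z_1^5,Z_2^3\}$ and $S_2=\omega$ to $\{Z_1^3,Z_2^1,Z_2^5\}$, each a single degenerate class).

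\textbf{Step 3 (combining the two corrections).} Since $R_i,E_i$ are supported on $g_1$ and $R_j,E_j$ on $g_2$, all four operators act on disjoint qutrits and mutually commute, so $R_iR_j\,E_iE_j\,\ket{\psi_L}=(R_iE_i)(R_jE_j)\ket{\psi_L}$, which by Step 2 equals $\ket{\psi_L}$ up to an irrelevant global phase. This establishes the claim.

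\textbf{Main obstacle.} The only delicate point is verifying that the degeneracy identities of Step 2 persist in the presence of the second error; but because $E_i,R_i$ live on $g_1$ while $E_j$ lives on $g_2$, the computation factorizes across disjoint supports and reduces exactly to the single-error degeneracy already checked. A minor additional check is that no single phase error is syndrome-silent, which is immediate from Table~\ref{tab:phase} since each such error triggers precisely one of $S_1,S_2$; hence both errors are always detected.
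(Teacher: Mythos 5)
Your overall route is the same one the paper takes: $S_1$ and $S_2$ have disjoint supports $g_1$ and $g_2$, so the two eigenvalues decouple, each error is decoded on its own register via Table~\ref{tab:phase}, and the two recoveries compose because everything involved acts on disjoint qutrits. Your Steps 1 and 3 are exactly the paper's observation, made more explicit, and they are fine. The difficulty is Step 2, which you rightly single out as the load-bearing step (and as the ``main obstacle''), but then discharge by assertion rather than by the ``same direct evaluation'' you invoke.

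Concretely, the claim that the four errors producing $\lambda_1=\omega^2$, namely $Z_1^0,Z_1^4,Z_2^2,Z_2^6$, ``act identically on every codeword'' is not what the direct evaluation gives. The identity $Z_1^0\ket{0_L}=Z_2^2\ket{0_L}$ displayed in the paper holds, but on the other codewords one finds $Z_2^2\ket{1_L}=\omega\, Z_1^0\ket{1_L}$ and $Z_2^2\ket{2_L}=\omega^2\, Z_1^0\ket{2_L}$: the two errors agree only up to a codeword-dependent phase. Equivalently, $(Z_2^2)^{\dagger}Z_1^0$ is the weight-two operator $Z_1\otimes I\otimes Z_1\otimes I\otimes I\otimes I\otimes I$ --- precisely the operator the paper notes commutes with both $S_1$ and $S_2$ --- and it acts on the code space as the nontrivial logical phase gate $\mathrm{diag}(1,\omega^2,\omega)$, not as the identity. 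The genuine degeneracies in that syndrome class are only $Z_1^0\equiv Z_1^4$ and $Z_2^2\equiv Z_2^6$; the class lumps together two logically inequivalent errors. Hence a recovery $R_i$ chosen as ``any one representative'' does not restore the state when the other representative occurred (applying $(Z_1^0)^{\dagger}$ after a $Z_2^2$ error leaves a residual logical phase error), and the $Z$-type stabilizers $S_3,\dots,S_6$ commute with all phase errors, so no further syndrome information is available to break the tie. To be fair, the paper's own one-sentence proof (``each error can be individually detected and corrected according to Table~\ref{tab:phase}'') glosses over exactly this point; your write-up has the merit of exposing where the work lies, but as written the verification Step 2 appeals to does not go through, so the proof is not complete without either showing the relevant difference operators act trivially on the code space (which the printed codewords contradict) or restricting to the genuinely degenerate pairs.
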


\begin{proof}
The stabilizers $S_1$ and $S_2$ operate on two disjoint sets of qutrits, so a single phase error cannot trigger both of them. Both the stabilizers can trigger only when there are phase errors on two qutrits $q_i$ and $q_j$ such that $q_i \in g_1$ and $q_j \in g_2$. Each error can be individually detected and corrected according to Table~\ref{tab:phase}. Therefore, two simultaneous phase errors can be corrected if they occur on two qutrits belonging to disjoint sets $g_1$ and $g_2$.
\end{proof}

\begin{lemma}
\label{eq:group2}
The proposed QECC can correct upto $|g_i|$ simultaneous phase errors on the qutrits belonging to the same disjoint set $g_i$, $i \in \{1,2\}$.
\end{lemma}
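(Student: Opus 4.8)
The plan is to treat the phase errors supported on a fixed group $g_i$ as an abelian group and to show that, modulo the code's degeneracy, the eigenvalue of $S_i$ alone determines the correction. Write $g_1=\{q_0,q_2,q_4,q_6\}$ (the case $g_2$ is completely analogous). Since $Z_2=Z_1^2$ and $Z_1^3=I$, any phase error acting only on $g_1$ has the form $E=Z_1^{a_0}\otimes Z_1^{a_2}\otimes Z_1^{a_4}\otimes Z_1^{a_6}$ on $q_0,q_2,q_4,q_6$ (identity elsewhere) with $(a_0,a_2,a_4,a_6)\in\mathbb{Z}_3^4$, and these form a group isomorphic to $\mathbb{Z}_3^{|g_1|}$. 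Because $S_2$ acts as the identity on every qutrit of $g_1$, only $S_1$ can trigger; and from the commutation rules of Eq.~\ref{eq:commute} and Eq.~\ref{eq:commute2} (the same bookkeeping as in Lemma~\ref{eq:l1}), since $S_1$ acts as $X_1,X_2,X_1,X_2$ on $q_0,q_2,q_4,q_6$, one gets $S_1 E=\omega^{\,s_1(E)}E S_1$ with $s_1(E)=2a_0+a_2+2a_4+a_6 \bmod 3$. Thus $E\mapsto s_1(E)$ is a surjective homomorphism $\mathbb{Z}_3^{|g_1|}\to\mathbb{Z}_3$ and the entire syndrome of any such $E$ is captured by $s_1(E)$, consistently with Table~\ref{tab:phase}.

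Next I would record the degeneracy. Inspecting the explicit codewords (as already illustrated above for $Z_1^0$ and $Z_2^2$), one checks that on each logical basis state $\ket{0_L},\ket{1_L},\ket{2_L}$ the operators $Z_1^{q_0}$, $Z_1^{q_4}$, $Z_2^{q_2}$, $Z_2^{q_6}$ multiply every computational term by the same phase, hence coincide on the whole codespace; the same holds for their inverses $Z_2^{q_0},Z_2^{q_4},Z_1^{q_2},Z_1^{q_6}$. The three relations $Z_1^{q_0}(Z_1^{q_4})^{-1}$, $Z_1^{q_0}(Z_2^{q_2})^{-1}$, $Z_1^{q_0}(Z_2^{q_6})^{-1}$ are independent elements of $\mathbb{Z}_3^{|g_1|}$ that fix the codespace, so the group of codespace‑fixing phase errors on $g_1$ has order at least $3^{\,|g_1|-1}$. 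On the other hand, any codespace‑fixing phase error has trivial $S_1$‑eigenvalue, so this group is contained in $\ker s_1$, which has order exactly $3^{\,|g_1|-1}$; hence the two groups are equal. Consequently any phase‑error pattern $E$ on $g_1$ acts on the codespace exactly as the single‑qutrit operator $Z_1^{-s_1(E)}$ applied to $q_0$.

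The correction step is then immediate: measure $S_1$ (and $S_2$, which returns $1$), read off $s_1(E)\in\mathbb{Z}_3$ via Table~\ref{tab:phase}, and apply $Z_1^{\,s_1(E)}$ to $q_0$; by the previous paragraph this restores the codeword regardless of which — or how many — of the four qutrits of $g_1$ carried an error, in particular for any collection of up to $|g_1|$ simultaneous phase errors. Re‑running the argument for $g_2=\{q_1,q_3,q_5\}$ with $S_2$ acting as $X_1,X_2,X_1$, so $s_2(E)=2b_1+b_3+2b_5\bmod 3$, and with the degeneracy $Z_1^{q_1}=Z_1^{q_5}=Z_2^{q_3}$ on the codespace (again read off from the codewords), the group of codespace‑fixing phase errors on $g_2$ again has order $3^{\,|g_2|-1}=|\ker s_2|$, and up to $|g_2|$ simultaneous phase errors on $g_2$ are corrected the same way.

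I expect the only substantive point — and the one needing care — to be the claim that the codespace‑fixing phase operators inside $g_i$ fill out \emph{all} of $\ker s_i$ rather than a proper subgroup: this is exactly what rules out two equal‑syndrome phase errors differing by a nontrivial logical operator, which would break the correction. I would establish it by the dimension count above, i.e. by producing $|g_i|-1$ independent codespace‑fixing phase operators supported on $g_i$ directly from the given forms of $\ket{0_L},\ket{1_L},\ket{2_L}$ (and invoking the necessary condition $\bra{0_L}\sigma\ket{0_L}=\bra{1_L}\sigma\ket{1_L}=\bra{2_L}\sigma\ket{2_L}$ so that a check on $\ket{0_L}$ suffices); the remainder is the routine commutation arithmetic of Lemma~\ref{eq:l1}.
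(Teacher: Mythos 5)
Your overall strategy --- reduce the lemma to showing that the codespace-fixing phase operators supported on $g_i$ exhaust $\ker s_i$ --- is a reasonable formalization, and your syndrome bookkeeping $s_1(E)=2a_0+a_2+2a_4+a_6$ is correct. But the step you yourself flagged as the crux fails: the four operators $Z_1$ on $q_0$, $Z_1$ on $q_4$, $Z_2$ on $q_2$, $Z_2$ on $q_6$ do \emph{not} coincide on the whole codespace. They coincide on $\ket{0_L}$ (which is the only check displayed in the paper), but already on the term $\ket{1111111}$ of $\ket{1_L}$ the operator $Z_1$ on $q_0$ contributes the phase $\omega$ while $Z_2$ on $q_2$ contributes $\omega^2$; in the paper's notation ($Z_i^j$ meaning $Z_i$ on qutrit $j$) one finds $Z_2^2\ket{1_L}=\omega\,Z_1^0\ket{1_L}$ and $Z_2^2\ket{2_L}=\omega^2\,Z_1^0\ket{2_L}$. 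Hence $Z_1^0(Z_2^2)^{-1}$ is not codespace-fixing: it acts as the logical operator $\ket{j_L}\mapsto\omega^{2j}\ket{j_L}$. Concretely, writing a phase pattern on $g_1$ with exponents $(a_0,a_2,a_4,a_6)$, its action on the codespace is governed by two invariants, $t=a_0+2a_2+a_4+2a_6$ (seen by $S_1$) and $w=a_0+a_2+a_4+a_6$ (not seen by any stabilizer, and producing the logical phase $\omega^{jw}$ on $\ket{j_L}$). The codespace-fixing group is $\{w\equiv t\equiv 0\}$, of order $3^{|g_1|-2}$, an index-$3$ proper subgroup of $\ker s_1$ --- so your dimension count does not close, and exactly the failure you said must be excluded occurs: $Z_1^0$ and $Z_2^2$ share the full syndrome yet differ by a nontrivial logical operator, so no recovery depending only on the measured eigenvalues of $S_1,S_2$ handles both. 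Also, invoking $\bra{0_L}\sigma\ket{0_L}=\bra{1_L}\sigma\ket{1_L}=\bra{2_L}\sigma\ket{2_L}$ so that ``a check on $\ket{0_L}$ suffices'' is circular here: for the diagonal operators in question that condition is precisely what needs verification, and the computation above shows it fails for $\sigma=Z_1^0(Z_2^2)^{-1}$.

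For comparison, the paper's own proof is far terser: it asserts that a product of phase errors on the qutrits of one set $g_i$ acts on the codeword like a single $Z_1$ or $Z_2$ error on a qutrit of $g_i$, corrected via Table~\ref{tab:phase}; that assertion rests on the same degeneracy claims which the text verifies only on $\ket{0_L}$. So your proposal cannot be repaired by citing the paper's displayed identity, and the honest outcome of your (otherwise correct) group-theoretic framework is that the codespace-fixing operators form a proper subgroup of the syndrome kernel --- which is an obstruction to, rather than a proof of, the stated lemma.
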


\begin{proof}
Consider $0 \leq m \leq |g_i|$ phase errors occurring simultaneously on individual qutrits belonging to the set $g_i$, $i \in \{1,2\}$. The action of these phase errors on the codeword is
$$\bigotimes_{i=1}^m Z^j\ket{\psi} = \omega^{q}\ket{\psi}$$

where, $Z^j \in \{I,Z_1,Z_2\}$ is the phase error on the $j^{th}$ qutrit in $g_i$, and $q \in \{0,1,2\}$. Therefore. multiple phase errors on qutrits of $g_i$ behave like a single $Z_1$ or $Z_2$ error acting on a qutrit of $g_i$, which can be corrected as depicted in Table~\ref{tab:phase}.
\end{proof}

\begin{thm}
\label{eq:phase}
The proposed QECC can correct upto seven simultaneous phase errors on the codeword.
\end{thm}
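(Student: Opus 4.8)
The plan is to obtain Theorem~\ref{eq:phase} as an immediate corollary of Lemmas~\ref{eq:group} and~\ref{eq:group2}, so almost all of the real work is already done. First I would record the two structural facts about the partition: $g_1\cup g_2=\{q_0,q_1,\dots,q_6\}$ and $g_1\cap g_2=\emptyset$, with $|g_1|=4$ and $|g_2|=3$. Consequently, any configuration of at most seven simultaneous phase errors on the data qutrits splits \emph{uniquely} into the errors supported on $g_1$ (at most $|g_1|=4$ of them) and the errors supported on $g_2$ (at most $|g_2|=3$ of them), and the worst case---one error on every data qutrit---is exactly $|g_1|+|g_2|=7$.

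Next I would apply Lemma~\ref{eq:group2} to each part separately. By that lemma the product of the phase errors lying in $g_1$ acts on the codeword precisely as a single $I$, $Z_1$, or $Z_2$ on a qutrit of $g_1$; hence it triggers $S_1$ only and leaves the eigenvalue of $S_2$ equal to unity. Symmetrically, the product of the phase errors lying in $g_2$ acts as a single $I$, $Z_1$, or $Z_2$ on a qutrit of $g_2$ and triggers $S_2$ only. Since $S_1$ and $S_2$ have disjoint supports, the two eigenvalues are measured independently (this is the mechanism already used in Lemma~\ref{eq:group}); reading them off and consulting Table~\ref{tab:phase} yields a representative error in each set, and applying the inverse of those two representatives returns the state to the code space. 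As in the single-error discussion, degeneracy means we need not identify the original error set---recovering the correct error \emph{state} suffices.

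The only point that deserves care is the ``collapse'' step, namely that an arbitrary product of $Z_1/Z_2$ factors over the qutrits of a set $g_i$ can never produce a syndrome for $S_i$ outside the three possibilities ($1$, $\omega$, $\omega^2$) tabulated in Table~\ref{tab:phase}; but this is exactly what Lemma~\ref{eq:group2} asserts, so no additional computation is needed here---the main obstacle, if any, was already dispatched in proving that lemma. I would close by noting tightness: seven equals the number of data qutrits, so the claim is best possible for phase errors, and the theorem follows.
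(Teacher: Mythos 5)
Your proposal is correct and matches the paper's argument: the paper likewise obtains the theorem as an immediate consequence of Lemmata~\ref{eq:group} and~\ref{eq:group2}, by splitting the errors between the disjoint sets $g_1$ and $g_2$, collapsing each part to a single effective $Z_1$/$Z_2$ error via Lemma~\ref{eq:group2}, and correcting the two parts independently via Lemma~\ref{eq:group}. Your write-up simply makes this decomposition and the syndrome-reading step explicit, which the paper leaves implicit.
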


\begin{proof}
The proof follows directly from Lemmata~\ref{eq:group} and ~\ref{eq:group2}.
\end{proof}

% \begin{proof}
% The proof follows directly from Lemmata~\ref{eq:group} \& ~\ref{eq:group2}.
% \end{proof}

The circuit for correcting phase errors on the codeword, shown in Fig.~\ref{fig:phase}, has a depth of 4 along $q_2$, $q_3$ and $q_6$.

\begin{figure}[htb]
    \centering
    \caption{Circuit for correcting phase errors with the proposed QECC}
    \includegraphics[scale=0.26]{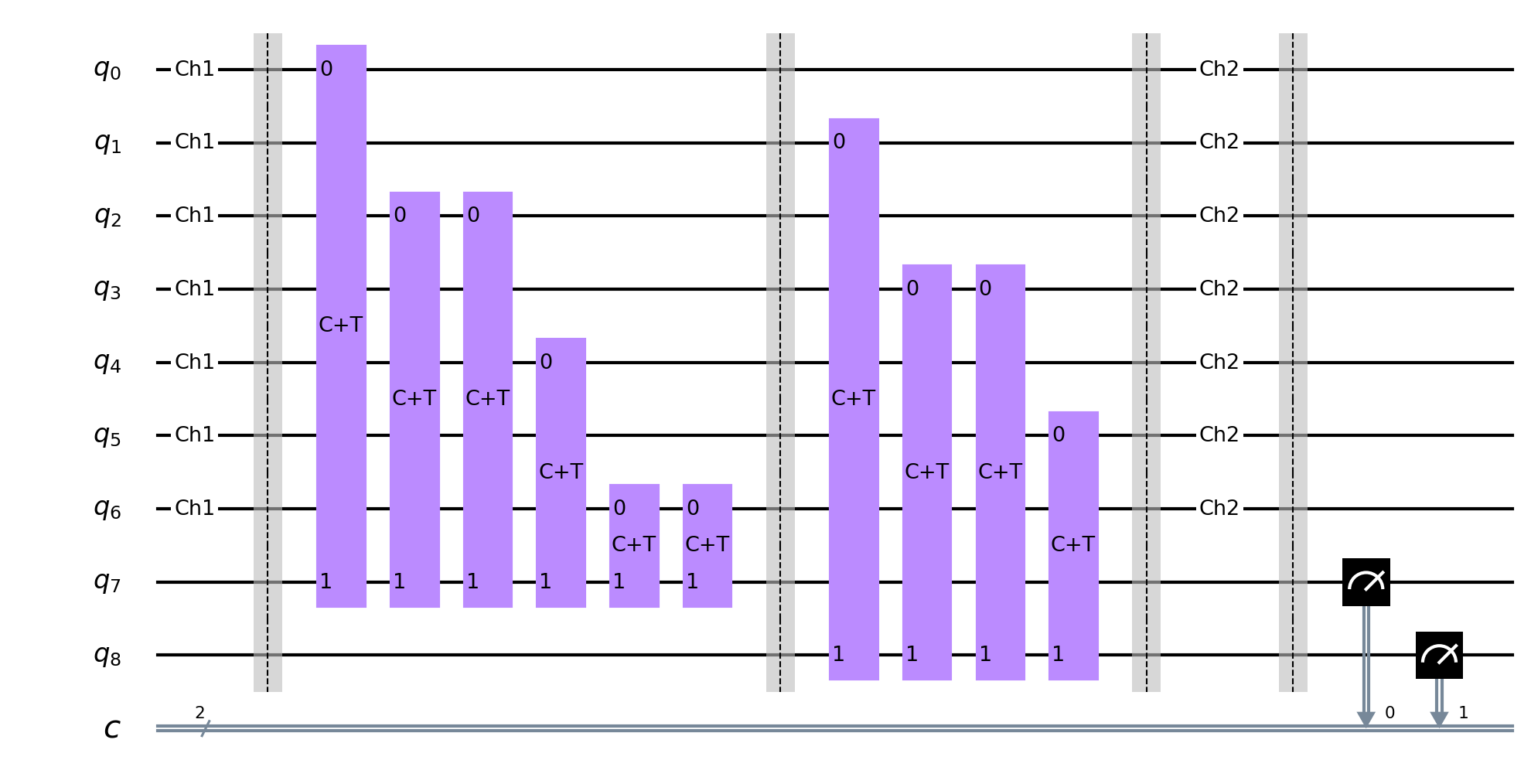}
    \label{fig:phase}
\end{figure}

\subsection{Correction of bit error}
The stabilizers for correcting bit errors are as follows:
\begin{eqnarray*}
S_3 &=& Z_1 \otimes Z_2 \otimes Z_1 \otimes Z_2 \otimes I \otimes I \otimes I\\
S_4 &=& I \otimes I \otimes I \otimes Z_1 \otimes Z_2 \otimes Z_1 \otimes Z_2\\
S_5 &=& I \otimes Z_1 \otimes Z_2 \otimes Z_1 \otimes Z_2 \otimes I \otimes I\\
S_6 &=& I \otimes I \otimes Z_1 \otimes Z_2 \otimes Z_1 \otimes Z_2 \otimes I
\end{eqnarray*}

Let us assume $X_i^j$ indicate error $X_i$ on the $j$-th qutrit, $i \in \{1,2\}$, $j \in \{0,1,...,6\}$. Table~\ref{tab:bit} shows the action of the stabilizers for a single $X_1$ error on different qutrits. Correction of $X_2$ errors will be similar. The non-unity values will be replaced by their complex conjugate (i.e. $\omega$ becomes $\omega^2$ and vice-versa).

\begin{table}[htb]
        \centering
        \caption{Correcting a single bit error CHECK}
        \begin{tabular}{|c|c|c|c|c|}
        \hline
        Error Type & $S_3$ & $S_4$ & $S_5$ & $S_6$\\
        \hline
        $X_1^0$ & $\omega$ & 1 & 1 & 1 \\
        \hline
        $X_1^1$ & $\omega$ & 1 & $\omega$ & 1\\
        \hline
        $X_1^2$ & $\omega$ & 1 & $\omega$ & $\omega$\\
        \hline
        $X_1^3$ & $\omega$ & $\omega$ & $\omega$ & $\omega$\\
        \hline
        $X_1^4$ & 1 & $\omega$ & $\omega$ & $\omega$\\
        \hline
        $X_1^5$ & 1 & $\omega$ & 1 & $\omega$\\
        \hline
        $X_1^6$ & 1 & $\omega$ & 1 & 1\\
        \hline
        \end{tabular}
        \label{tab:bit}
\end{table}

Note that the stabilizers $S_3, \hdots, S_6$ are not unique. For example, $S_3 = Z_2 \otimes Z_1 \otimes Z_2 \otimes Z_1 \otimes I \otimes I \otimes I$ is valid as well. The difference will be that some values of the Table~\ref{tab:bit} will be replaced by its complex conjugate.

The circuit for correcting a single bit error can be similarly obtained as for the phase error correction circuit ($Z_1$ and $Z_2$ realized using a single or two $C+T$ gates respectively). Four stabilizers for bit error correction leads to four ancilla qutrits. The depth of the circuit for bit error correction is 6 along $q_3$. Therefore, the depth of the entire error correction circuit is 10, which is along $q_3$.

\begin{lemma}
\label{eq:l2}
The set of stabilizers $S_3, \hdots, S_6$ can correct simultaneous bit errors occurring on $q_0$ and $q_6$.
\end{lemma}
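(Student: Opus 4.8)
The plan is to verify directly that bit errors on $q_0$ and $q_6$ produce distinguishable syndromes under $S_3,\dots,S_6$, both individually and when they occur together. First I would read off from the stabilizer definitions which of $S_3,\dots,S_6$ act nontrivially on $q_0$ and which act nontrivially on $q_6$: the qutrit $q_0$ appears only in $S_3$ (with $Z_1$), while $q_6$ appears only in $S_4$ (with $Z_2$). Hence an $X_1$ or $X_2$ error on $q_0$ triggers $S_3$ alone, and an $X_1$ or $X_2$ error on $q_6$ triggers $S_4$ alone — these are already recorded in Table~\ref{tab:bit} (rows $X_1^0$ and $X_1^6$). Since $S_3$ and $S_4$ act on disjoint supports restricted to the pair $\{q_0,q_6\}$, the two single-error syndromes live in different stabilizers and are therefore distinguishable.

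Next I would handle the simultaneous case. Using Lemma~\ref{eq:l1} (or just Eqs.~\eqref{eq:commute}--\eqref{eq:commute2}), the eigenvalue contributed to $S_3$ by an error on $q_0$ is independent of whatever happens on $q_6$, and likewise the eigenvalue contributed to $S_4$ by an error on $q_6$ is independent of the error on $q_0$, because the two errors act on different qutrits. Concretely, for errors $X_a^0 X_b^6$ with $a,b\in\{1,2\}$, the syndrome is $(S_3,S_4,S_5,S_6) = (\omega^{c(a)},\omega^{c(b)},1,1)$ where $c(1),c(2)$ are the appropriate (distinct, nonzero) exponents determined by the commutation rule and by whether the relevant stabilizer entry is $Z_1$ or $Z_2$. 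The key point is that the $S_3$ component determines $a$ and the $S_4$ component determines $b$ independently, so all four combinations $(a,b)$ give distinct syndromes, and each is distinct from the single-error and no-error syndromes. Thus every element of $\mathcal{E}^{(0)}\otimes\mathcal{E}^{(6)}$ — the span of $\{I,X_1,X_2\}$ on $q_0$ tensored with $\{I,X_1,X_2\}$ on $q_6$ — is uniquely identified, and one closes the argument by invoking the Knill--Laflamme conditions already stated for the code (the diagonal form $\bra{i_L}\sigma_m^\dagger\sigma_n\ket{j_L}=\delta_{ij}\alpha_{mn}$) to conclude correctability.

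I anticipate the only real subtlety is bookkeeping the $\omega$ versus $\omega^2$ factors: one must check that an $X_2$ error on $q_0$ does not accidentally yield the same $S_3$ eigenvalue as an $X_1$ error on $q_0$, and similarly on $q_6$. This follows because $X_2 = X_1 X_1$ forces the eigenvalue to square, and $\omega,\omega^2$ are distinct, so $X_1$ and $X_2$ on the same qutrit are always separated by any stabilizer that detects them at all. The rest is the routine observation that disjoint supports mean no cross terms, so I would not belabor it. Hence the set $S_3,\dots,S_6$ corrects any simultaneous bit errors on the pair $(q_0,q_6)$.
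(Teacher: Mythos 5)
Your proof is correct and follows essentially the same route as the paper: observe from the stabilizer supports (equivalently Table~\ref{tab:bit}) that a bit error on $q_0$ triggers only $S_3$ and one on $q_6$ triggers only $S_4$, so the joint error triggers exactly $\{S_3,S_4\}$ while $S_5,S_6$ stay trivial, a pattern no other correctable error produces. Your extra bookkeeping of the $\omega$ versus $\omega^2$ eigenvalues (separating $X_1$ from $X_2$ on each qutrit) and the appeal to the Knill--Laflamme conditions only add detail beyond the paper's table-based argument.
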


\begin{proof}
From Table~\ref{tab:bit} it is evident that if both $q_0$ and $q_6$ have bit errors, then both the stabilizers $S_3$ and $S_4$ will trigger. These are the only two stabilizers to trigger in this scenario, and there are no other errors for which only $S_3$ and $S_4$ trigger. Hence, if these two stabilizers trigger, then it is possible to identify two simultaneous bit errors on $q_0$ and $q_6$.
\end{proof}

In the following, we show that it is possible to generate a set of stabilizers $S_3, \hdots, S_6$ such that they can correct a single bit error on any qutrit, as well as simultaneous bit errors on any pre-defined pair ($q_i,q_j$) of qutrits (except for three pairs). We first state the sufficient condition for correcting two simultaneous bit errors and discuss about the pairs for which simultaneous bit errors are not correctable.

\begin{lemma}
\label{eq:l3}
Two simultaneous bit errors on qutrits $q_i \neq q_j$ can be corrected if there exist two stabilizers $S_i \neq S_j$ which trigger for bit errors on $q_i$ and $q_j$ respectively, and there does not exist any single bit error, or simultaneous bit errors on any other pair of qutrits, for which only $S_i$ and $S_j$ trigger together.
\end{lemma}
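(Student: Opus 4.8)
The plan is a syndrome-distinguishability argument run against the error set that is already known to be correctable. Recall that a stabilizer-type code corrects a set of errors precisely when any two of its members either have different syndromes (here, different tuples of eigenvalues of $S_3,\dots,S_6$) or differ by an element of the stabilizer group. The code already corrects every single-qutrit bit error, with syndromes listed in Table~\ref{tab:bit} and its complex conjugate, and those fourteen syndromes are pairwise distinct. To adjoin to this correctable set the two-qutrit error $E$ that applies some $X_a$ ($a\in\{1,2\}$) on $q_i$ and some $X_b$ ($b\in\{1,2\}$) on $q_j$, it therefore suffices to show that $E$, together with the three sibling errors obtained by varying $a$ and $b$ on the same pair, has a syndrome shared by no single-qutrit bit error and distinct from those of its siblings.

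First I would record that syndromes are multiplicative: since the support of $E$ splits over the disjoint qutrits $q_i$ and $q_j$ and each $S_k$ is a tensor product, the eigenvalue of $S_k$ on $E$ is the product of its eigenvalue on the $X_a$-error on $q_i$ and its eigenvalue on the $X_b$-error on $q_j$, each of which is read off from Table~\ref{tab:bit} or its conjugate. The hypothesis supplies two stabilizers $S_i\neq S_j$ triggered by the $q_i$- and $q_j$-errors respectively, and asserts that the resulting product syndrome of $E$ is exactly ``$S_i$ and $S_j$ triggered, the remaining two not triggered,'' while no single bit error and no two-qutrit bit error on any other pair produces this same pattern.

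Next I would conclude: observing the pattern $\{S_i,S_j\}$ unambiguously flags that bit errors struck $q_i$ and $q_j$. In the generic situation --- as with $(q_0,q_6)$ and $(S_3,S_4)$ in Lemma~\ref{eq:l2} --- $S_i$ is triggered only by the $q_i$-error and $S_j$ only by the $q_j$-error among the affected pair, so the eigenvalue of $S_i$ ($\omega$ versus $\omega^2$) identifies whether $X_1$ or $X_2$ hit $q_i$, and likewise $S_j$ for $q_j$, which is exactly the bookkeeping already used in Table~\ref{tab:bit}; this simultaneously separates the four sibling errors. Applying the inverse bit operators on $q_i$ and $q_j$ then restores the codeword, and by the uniqueness clause the enlarged error set (all single bit errors together with the four two-qutrit errors on this pair) remains internally consistent, so the code corrects the two simultaneous bit errors.

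The one place where care is genuinely needed is that syndrome multiplicativity permits cancellation: a stabilizer whose eigenvalue is $\omega$ on one of the single errors and $\omega^2$ on the other shows eigenvalue $1$ on $E$, so the triggered set of a two-qutrit error need not be the union of the two single-error triggered sets. The hypothesis is phrased precisely to absorb this --- it demands that, after all such cancellations, $\{S_i,S_j\}$ is both the actual syndrome of $E$ and unique to it. Hence the mathematical content of the lemma is this logical reduction; the laborious step, deferred to the explicit construction that follows, is verifying the hypothesis for a concrete choice of $S_3,\dots,S_6$, i.e.\ enumerating the syndromes over all single bit errors and all $21$ pairs.
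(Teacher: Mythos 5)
Your proposal is correct and follows essentially the same route as the paper's own proof: the triggered set $\{S_i,S_j\}$, being unique among all single bit errors and other pairs by hypothesis, decodes unambiguously to simultaneous bit errors on $(q_i,q_j)$, after which the eigenvalues ($\omega$ versus $\omega^2$) pin down $X_1$ versus $X_2$ as in Table~\ref{tab:bit}. Your explicit treatment of syndrome multiplicativity and possible cancellation is in fact slightly more careful than the paper's one-line justification that the triggered set of the pair error is exactly $\{S_i,S_j\}$, but it does not change the argument.
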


\begin{proof}
Let $S_i$ and $S_j$ be two distinct stabilizers such that $S_i$ triggers for a bit error on $q_i$ and $S_j$ triggers for a bit error on $q_j$. Therefore, if bit errors occur on both $q_i$ and $q_j$ simultaneously, then the set of triggered stabilizers is $\{S_i,S_j\}$. Moreover, if there exist no other single error, or simultaneous bit errors on some different pair of qutrits for which the set of triggered stabilizers is $\{S_i,S_j\}$, then this set of triggered stabilizers can be uniquely decoded as simultaneous bit errors on $q_i$ and $q_j$.
\end{proof}

Lemma~\ref{eq:l3} is not a necessary condition for simultaneous bit error correction, because it may be possible to do so even when multiple stabilizers trigger for one or both the errors. However, it is not possible to correct simultaneous bit errors on every pair of qutrits, as elaborated in Lemma~\ref{eq:l4}.

\begin{lemma}
\label{eq:l4}
There are no stabilizers $S_i,~ S_j \in \{I,Z_1,Z_2\}^{\otimes 7}$ for the proposed QECC design such that it can correct two simultaneous bit errors if they occur on two qutrits of $g_2$.
\end{lemma}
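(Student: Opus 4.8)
The plan is to trace everything back to the single structural fact that the only phase stabilizer acting on $g_2 = \{q_1, q_3, q_5\}$, namely $S_2 = I \otimes X_1 \otimes I \otimes X_2 \otimes I \otimes X_1 \otimes I$, acts non-trivially on \emph{all three} of its qutrits. First I would pin down how a $Z$-type operator may act on $g_2$: writing the restriction of a candidate stabilizer $S\in\{I,Z_1,Z_2\}^{\otimes 7}$ to $(q_1,q_3,q_5)$ as exponents $(p,q,r)$ of $Z_1$ (so $Z_1^{0}=I$, $Z_1^{2}=Z_2$) and using $X_1Z_1=\omega^2 Z_1X_1$, $X_1Z_2=\omega Z_2X_1$ from Eqs.~\ref{eq:commute}--\ref{eq:commute2} to collect the phase that $S$ and $S_2$ exchange on each qutrit, the requirement $[S,S_2]=0$ becomes the single linear condition $p+2q+r\equiv 0\pmod 3$. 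Thus the $g_2$-restrictions of admissible bit stabilizers form a $2$-dimensional $\mathbb{F}_3$-space, and enumerating its eight non-zero vectors shows that \emph{none} is supported on a single qutrit of $g_2$: a bit stabilizer triggered by a bit error on one qutrit of $g_2$ is necessarily triggered by a bit error on another qutrit of $g_2$. In particular the sufficient criterion of Lemma~\ref{eq:l3} can never be met for a pair lying inside $g_2$, since one cannot produce a stabilizer that is triggered only by the error on the first qutrit of the pair.

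The second half is to show that no other mechanism works either. Fix a pair $q_a,q_b\in g_2$ and let $q_c$ be the third qutrit of $g_2$. Since $S_2$ is invertible on $q_a$, for any bit error $E$ supported on $\{q_a,q_b\}$ there is a power $S_2^{m}$ with $E\,S_2^{m}$ trivial on $q_a$, hence supported inside $\{q_b,q_c\}$; as $S_2^{m}$ is a stabilizer, $E$ and $E\,S_2^{m}$ have the same syndrome. Carrying this out for each of the four pair errors shows their syndromes are duplicated by bit errors on the pair $(q_b,q_c)$ (and, symmetrically, on $(q_a,q_c)$) and by single bit errors on $q_c$. I would then finish with a counting argument: the $g_2$-supported bit errors that a single-error-correcting choice of $S_3,\dots,S_6$ must already distinguish (the identity and the six single errors on $q_1,q_3,q_5$), together with the four pair errors on $(q_a,q_b)$, must inject into the at most $3^{2}$ syndromes available for $g_2$-supported errors (the $g_2$-restrictions spanning only a plane), and one checks that the forced coincidences cannot all be of the harmless degenerate kind while the single bit errors on $q_1,q_3,q_5$ keep the distinct syndromes recorded in Table~\ref{tab:bit} --- a contradiction. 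Because $q_a,q_b$ were an arbitrary pair inside $g_2$, this covers $(q_1,q_3)$, $(q_1,q_5)$ and $(q_3,q_5)$.

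The delicate point --- and the one I would spend the most care on --- is the last reconciliation step: the commutation observation \emph{immediately} defeats the Lemma~\ref{eq:l3} criterion, but upgrading this to a genuine impossibility requires a precise accounting of which multiples of the $S_2$-pattern $(1,2,1)$ appear among the admissible $g_2$-restrictions and of which syndrome collisions among the eleven relevant $g_2$-errors can legitimately be absorbed by degeneracy of the code. I expect this to be a short finite case check on the plane of $g_2$-restrictions rather than anything conceptually hard, but it is where the argument has to be made airtight.
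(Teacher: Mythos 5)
Your first half is, in substance, the paper's own proof, done more cleanly: the paper argues exactly that a $Z$-type stabilizer which acts on one qutrit of $g_2$ must, in order to commute with $S_2$, act on another qutrit of $g_2$ (your condition $p+2q+r\equiv 0\pmod 3$, with no admissible restriction supported on a single qutrit), so the designated-stabilizer scheme of Lemma~\ref{eq:l3} collapses: the pair $\{S_i,S_j\}$ meant to flag $(q_i,q_j)$ necessarily also triggers for a single bit error on the third qutrit $q_k$. The paper stops there; it reasons entirely at the level of \emph{which} stabilizers trigger, as in Lemma~\ref{eq:l3}, and never attempts to exclude ``every other mechanism.''

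The gap is in your second half, and it is not merely an omitted computation: the finite check you defer does not come out the way you need. The $Z$-stabilizer group is forced --- it is the full $4$-dimensional subspace of $\mathbb{F}_3^7$ orthogonal to the exponent patterns of $S_1$, $S_2$ and to $(1,1,1,1,1,1,1)$ --- and its $g_2$-restrictions fill the whole plane $p+2q+r\equiv 0$. Consequently two $g_2$-supported bit errors are syndrome-equivalent under \emph{all} admissible stabilizers iff they differ by a multiple of $(1,2,1)$, i.e.\ by a power of $S_2$, which is exactly the degenerate relation (same action on the codespace, same recovery). Your eleven relevant errors (identity, six singles on $g_2$, four pairs on $(q_a,q_b)$) then occupy exactly nine cosets of $\mathrm{span}\{(1,2,1)\}$: two of the four pair errors lie in the coset of a single error on the third qutrit --- a harmless coincidence --- and the other two pair errors occupy the two leftover cosets, with syndromes distinct from the identity, from every single error on $g_2$, and (by a short weight check on the dual space) from every single error on $g_1$ as well. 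So the ``forced coincidences'' are all of the harmless degenerate kind, the injectivity-into-$3^2$ counting yields no contradiction, and the proposed route cannot be completed; pushed through honestly, your own $\mathbb{F}_3$ machinery shows that a decoder using the full ternary eigenvalues plus degenerate recovery would \emph{not} be obstructed, so the unconditional impossibility you aim for is not reachable this way. What stands is precisely your first half, i.e.\ the paper's trigger-pattern argument.
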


\begin{proof}
Let two bit errors occur simultaneously on $q_i$ and $q_j$, where $q_i, q_j \in g_2$. Let the third qutrit in $g_2$ be $q_k$. Thus, we need at least 2 stabilizers $S_i$ and $S_j$, $S_i, S_j \in \{I, Z_1, Z_2\}^{\otimes 7}$ such that $S_i$ operates on $q_i$ and not on $q_j$, and vice versa for $S_j$. Furthermore, the eigenvalues of $S_i$ and $S_j$ together cannot be a syndrome for some other error. However,  if $S_i$ operates on $q_i$, then it must also operate on $q_k$ in order to commute with the stabilizer $S_2$ for phase error. Similarly $S_j$ must again operate on $q_k$ in order to commute with $S_2$. Any other stabilizer which operates on $q_k$, must also operate on $q_i$ or $q_j$ in order to commute with $S_2$. Therefore, $S_i, S_j$ together form the syndrome for $q_k$. If both $S_i$ and $S_j$ show non-identity eigenvalues, then it is not possible to distinguish whether both $q_i$ and $q_j$ are erroneous, or only $q_k$ is erroneous. Hence, two bit errors occurring simultaneously on two qutrits of $g_2$ cannot be corrected.
\end{proof}

We now provide an algorithm (Algorithm~\ref{alg}) to generate the set of stabilizers $S_3, \hdots, S_6$ in order to correct simultaneous bit errors in a pair of qutrits which is allowed by Lemma~\ref{eq:l4}. There are four stabilizers for correcting bit errors and seven qutrit positions for a 7-qutrit codeword. Let $S_i^j$ denote the operator acting on the $j^{th}$ qutrit for the $i^{th}$ stabilizer, $3 \leq i \leq 6$, $0 \leq j \leq 6$. Furthermore, let $p_k$ denote the $k^{th}$ qutrit of the codeword, $0 \leq k \leq 6$, and $d_i$ denotes the number of non-identity elements at the $i^{th}$ qutrit position for the stabilizers $S_3, \hdots, S_6$. For example, in the set of stabilizers shown at the beginning of Section IV B, $d_0 = 1$, $d_1 = 2$, $d_3 = 3$ and so on. Without loss of generality, in accordance to Lemma~\ref{eq:l3}, we shall consider $S_i = S_3$ and $S_j = S_4$ when correcting simultaneous bit errors on $q_i$ and $q_j$, $i < j$.

\begin{lemma}
\label{eq:alg}
Algorithm~\ref{alg} creates a valid set of stabilizers which can correct simultaneous bit errors on the pre-defined pairs of qutrits $q_i$ and $q_j$ excluding the ones by Lemma~\ref{eq:l4}.
\end{lemma}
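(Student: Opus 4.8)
The plan is to verify that Algorithm~\ref{alg} produces four stabilizers $S_3,\ldots,S_6$ satisfying three properties: (i) each $S_i \in \{I,Z_1,Z_2\}^{\otimes 7}$ commutes with the phase stabilizers $S_1$ and $S_2$; (ii) the induced syndrome map on single bit errors $X_1^k$ (and hence $X_2^k$) is injective, so a single bit error is correctable; and (iii) for the pre-specified target pair $(q_i,q_j)$ not excluded by Lemma~\ref{eq:l4}, exactly the two stabilizers $S_3$ and $S_4$ trigger, and no single error nor any other pair of simultaneous bit errors produces precisely the triggered set $\{S_3,S_4\}$ — which by Lemma~\ref{eq:l3} suffices.

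First I would record the commutation constraint explicitly. By Lemma~\ref{eq:l1}, a stabilizer $S \in \{I,Z_1,Z_2\}^{\otimes 7}$ commutes with $S_1 = X_1 \otimes I \otimes X_2 \otimes I \otimes X_1 \otimes I \otimes X_2$ iff the product of the relevant $\omega$-phases over the positions $\{q_0,q_2,q_4,q_6\}$ is trivial, and similarly for $S_2$ over $\{q_1,q_3,q_5\}$. This yields one linear equation over $\mathbb{Z}_3$ on the exponents of each $S_i$ restricted to $g_1$, and one on the exponents restricted to $g_2$. I would check that every assignment Algorithm~\ref{alg} makes respects these two parity conditions — this is the step that forces, e.g., $S_i$ acting on one qutrit of $g_2$ to also act on another qutrit of $g_2$, exactly the phenomenon exploited in Lemma~\ref{eq:l4}.

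Next I would analyze the syndrome table the algorithm generates. Because $S_i = S_3$, $S_j = S_4$ is fixed for the target pair, the algorithm must place non-identity operators so that $q_i$ lies in the support of $S_3$ but not $S_4$, $q_j$ in the support of $S_4$ but not $S_3$, and the remaining qutrits get supports that (a) keep each $S_i$ commuting with $S_1,S_2$ and (b) make the seven single-error syndromes distinct and distinct from $\{S_3,S_4\}$-only. I would argue this by a short case analysis on where $q_i,q_j$ sit relative to $g_1,g_2$: the excluded cases of Lemma~\ref{eq:l4} are precisely $\{q_i,q_j\}\subset g_2$, and in all remaining configurations one can consistently complete the four rows — I would exhibit the completion rule the algorithm uses (the $d_k$ bookkeeping ensures no position is overloaded and no two columns coincide) and check the resulting $4\times 7$ exponent matrix has the required injectivity. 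Verifying the "no other pair collides with $\{S_3,S_4\}$" clause is the crux: a colliding pair $(q_a,q_b)$ would need $q_a,q_b$ jointly supported on exactly $S_3,S_4$ with matching net eigenvalues, and I would rule this out from the structure of the supports the algorithm assigns (in particular, any qutrit outside $\{q_i,q_j\}$ appears either outside $\mathrm{supp}(S_3)\cup\mathrm{supp}(S_4)$ or in $\mathrm{supp}(S_5)\cup\mathrm{supp}(S_6)$ as well).

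The main obstacle I expect is the last point — establishing that the constructed syndrome assignment is genuinely collision-free for single errors \emph{and} for the one designated pair simultaneously, uniformly over all $18$ admissible pairs. This is really a finite check, but doing it cleanly requires identifying the invariant Algorithm~\ref{alg} maintains (the counts $d_k$ together with the $\mathbb{Z}_3$-parity conditions on $g_1$ and $g_2$) and showing that invariant implies both injectivity properties, rather than enumerating all cases by brute force; I would structure the proof around that invariant and fall back to the explicit table in Table~\ref{tab:bit} as the base case.
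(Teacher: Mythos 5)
Your outline follows the same verification scheme as the paper's proof (validity of $S_3,\hdots,S_6$, exclusivity of $S_3$ and $S_4$ for the designated pair, uniqueness of all other syndromes via the $d_k$ bookkeeping, and sufficiency via Lemma~\ref{eq:l3}), but it has one concrete gap in the validity step. You take ``valid'' to mean membership in $\{I,Z_1,Z_2\}^{\otimes 7}$ plus commutation with $S_1$ and $S_2$. That is not enough: commuting with $S_1,S_2$ does not imply fixing the code space. The paper's own example $Z_1 \otimes I \otimes Z_1 \otimes I \otimes I \otimes I \otimes I$ commutes with both $S_1$ and $S_2$ (it is exactly the operator used to show the distance is below $3$), yet it is a logical operator, not a stabilizer --- it multiplies the terms of $\ket{1_L}$ and $\ket{2_L}$ by nontrivial phases. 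Concretely, $\ket{1_L}$ and $\ket{2_L}$ are superpositions over cosets of the classical code generated by the exponent vectors of $S_1$ and $S_2$, so a $Z$-type operator fixes all three codewords only if, in addition to the two orthogonality conditions you state, its exponents sum to $0 \bmod 3$; for the weight-$4$ operators produced here this forces exactly two $Z_1$ and two $Z_2$ in each stabilizer. The algorithm enforces this explicitly through the condition $S_i^k \otimes S_i^l \ket{\psi} = \ket{\psi}$ (and the appendix, step 5, spells out why e.g.\ four $Z_1$'s must be rejected), and the paper's proof records that each $S_i$ stabilizes the codeword. As written, your criterion (i) would certify operator sets that destroy the encoded state when measured, so the lemma's ``valid set of stabilizers'' claim is not established by your argument.

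The second issue is smaller but worth flagging: the collision-freeness of the syndrome table --- no single bit error and no other admissible pair triggering exactly $\{S_3,S_4\}$, and distinct syndromes for all single errors --- is the heart of the lemma, and you leave it as a plan (``I would argue/check/rule out'') without identifying the invariant you say the proof should rest on. The paper discharges this point by appealing to the explicit check built into the construction of the last stabilizer: for any pair $q_k,q_l$ in the same set $g_i$ other than $(q_i,q_j)$, either $d_k \neq d_l$ or the tie is broken in $S_6$ so that the two columns do not coincide, which yields distinct trigger patterns. To complete your version you would need to state that column-distinctness property precisely and show the $d_k$ updates plus the $S_6$ tie-break guarantee it; merely invoking ``the $d_k$ bookkeeping'' does not yet do so.
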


\begin{proof}
We consider the proof in the following steps:
\begin{itemize}
    \item[(1)] For each stabilizer $S_i$, $3 \leq i \leq 6$, two operators placed in accordance with Lemma~\ref{eq:l1}, operate on qutrits of set $g_1$ ($g_2$). All these stabilizers $S_3, \hdots, S_6$ commute with $S_1$ and $S_2$. Further, it is verified that each $S_i$ stabilizes the codeword.
    % \item This set of stabilizer will fail to correct a single bit error, if for two bit errors on two different qutrits, the same set of stabilizers trigger. However, this can only happen when two stabilizers have the same set of operators operating on two different qutrits. This condition is checked in the final step for $S_6$. Therefore, even if two qutrits had same operator sequence for two stabilizers upto $S_5$, this tie will be broken for the last stabilizer.
    % \item There are four stabilizers, each of which have weight four. Furthermore, the variable $d_i$, $0 \leq i \leq 6$, ensures that the operators are not placed on the same qutrits for each stabilizer. Therefore, 16 operators will be placed such that it covers all the seven qutrits.
    \item[(2)] In Algorithm~\ref{alg}, $S_3$ ($S_4$) alone is chosen as a non-identity operator on qutrit $q_i$ ($q_j$), and only $S_3$ ($S_4$) can trigger for a bit error on $q_i$ ($q_j$). $S_3$ and $S_4$ together trigger when both $q_i$ and $q_j$ have simultaneous bit errors.
    \item[(3)] For unique detection of simultaneous bit errors, we require that for any qutrit pair ($q_k$,$q_l$), such that $q_k, q_l \in g_i$, $i \in \{1,2\}$, other than ($q_i$,$q_j$), either $d_k \neq d_l$, or if $d_k = d_l$, then the qutrits do not have non-identity operators for all the stabilizers at the same position. This condition is checked when putting the operators on the final stabilizer for both the sets $g_1$ and $g_2$. Therefore, even if such a scenario occurs upto stabilizer $S_5$, the tie will be broken in stabilizer $S_6$, leading to a unique pattern of non-unity eigenvalues for each error.
\end{itemize} 
\end{proof}

The time complexity of Algorithm~\ref{alg} is $\mathcal{O}(1)$, since the number of qutrits and the number of stabilizers is fixed.

\begin{algorithm}[H]
\caption{To generate the set of stabilizers $S_3, \hdots, S_6$ to correct simultaneous bit errors}
\label{alg}
\begin{algorithmic}[1]
\REQUIRE Qutrits $q_i$ and $q_j$ on which simultaneous bit errors are to be corrected.
\ENSURE The set of stabilizers $S_3, \hdots, S_6$.
\STATE $d_h \leftarrow 0$, $0 \leq h \leq 6$.
\STATE $S_3^i, S_4^j \leftarrow Z_1$.
\STATE $d_i \leftarrow d_i + 1$, $d_j \leftarrow d_j + 1$.
\STATE $S_q^i \leftarrow I$, $4 \leq q \leq 6$.
\STATE $S_q^j \leftarrow I$, $3 \leq q \leq 6$, $q \neq 4$.
\FORALL{stabilizers $S_3, \hdots, S_6$}
\FOR{$i \in \{1,2\}$}
\FORALL{qutrits in set $g_i$}
\STATE $q_k, q_l \leftarrow$ qutrits in $g_1$ for which $d_k$ and $d_l$ have the minimum and the second minimum values, and operators have not been assigned for these qutrits on the stabilizer. Break ties such that the operators at positions $P_k$ and $P_l$ of stabilizers $S_1$ and $S_2$ respectively are not equal. If such a scenario is not available, break ties arbitrarily. If one position $q_k$ is already occupied by a non-identity operator, choose only $q_l$ accordingly.
\IF{stabilizer is $S_6$}
\IF{$d_k \neq d_l$}
\STATE continue
\ELSE
\WHILE{both $q_k$ and $q_l$ has non-identity operators at the same positions for all prior stabilizers}
\STATE Discard $q_l$. New $q_l \leftarrow$ qutrit having the minimum value of $d_l$ apart from $q_k$ or the previous $q_l$. 
\ENDWHILE
\ENDIF
\ENDIF
\STATE Operator on $k^{th}$ qutrit $\leftarrow Z_u$, and operator on $l^{th}$ qutrit $\leftarrow Z_r$, $u, r \in \{1,2\}$ such that Lemma~\ref{eq:l1} is satisfied, and $S_i^k \otimes S_i^l \ket{\psi} = \ket{\psi}$.
\STATE $d_k \leftarrow d_k + 1$, $d_l \leftarrow d_l + 1$.
\ENDFOR
\ENDFOR
\ENDFOR
\end{algorithmic}
\end{algorithm}

We show an example of the stabilizer construction using Algorithm~\ref{alg} in Appendix.

\begin{thm}
\label{eq:bit}
A set of stabilizers $\in \{I,Z_1,Z_2\}^{\otimes 7}$ can be generated in $\mathcal{O}(1)$ time such that two simultaneous bit errors on a given pair $(q_i,q_j)$ of qutrits where both $q_i$ and $q_j \notin g_2$, can be corrected.
\end{thm}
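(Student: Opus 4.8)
The plan is to derive Theorem~\ref{eq:bit} by assembling ingredients already in hand: Lemma~\ref{eq:l4} pins down exactly which pairs are hopeless (those lying entirely inside $g_2$), Lemma~\ref{eq:l3} gives a checkable sufficient condition for a pair to be correctable, and Algorithm~\ref{alg} together with Lemma~\ref{eq:alg} produces a stabilizer set meeting that condition for every remaining pair. Concretely I would fix an arbitrary data-qutrit pair $(q_i,q_j)$ not contained in $g_2$, run Algorithm~\ref{alg} on it, and verify three things about the output $S_3,\dots,S_6$: (i) each $S_m$ is a genuine stabilizer of the code, i.e.\ it lies in $\{I,Z_1,Z_2\}^{\otimes 7}$, commutes with the phase stabilizers $S_1,S_2$, and fixes $\ket{0_L},\ket{1_L},\ket{2_L}$; (ii) $q_i$ is the only data qutrit on which $S_3$ acts non-trivially and $q_j$ the only one on which $S_4$ does, so a simultaneous bit error on $(q_i,q_j)$ triggers $S_3$ and $S_4$ with a definite eigenvalue pattern while leaving $S_5,S_6$ at unity; (iii) no single bit error and no simultaneous bit error on any other qutrit pair reproduces that trigger pattern. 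Given (i)--(iii), Lemma~\ref{eq:l3} yields correctability of the designated pair, single-bit-error correctability follows as in Table~\ref{tab:bit}, and since every loop of Algorithm~\ref{alg} ranges over the fixed set of seven qutrits and four stabilizers the construction runs in $\mathcal{O}(1)$ time; that is the theorem.

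Part (i) rests on Lemma~\ref{eq:l1}: Algorithm~\ref{alg} always assigns the non-identity operators of $S_m$ as a pair on two qutrits of $g_1$ and a pair on two qutrits of $g_2$, choosing their types so that the relevant branch of Lemma~\ref{eq:l1} holds, which forces $[S_m,S_1]=[S_m,S_2]=0$; the assignment step additionally demands that $S_m$ restricted to each such pair fixes the codeword, a condition one verifies directly against the explicit strings defining $\ket{0_L},\ket{1_L},\ket{2_L}$. Part (ii) is immediate from the initialization lines of the algorithm, which place a non-identity operator on $q_i$ only in $S_3$ and on $q_j$ only in $S_4$ and pin those positions to the identity in the three other stabilizers; thus $q_i$ and $q_j$ are the \emph{signature} qutrits of $S_3$ and $S_4$.

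The crux, and the step I expect to be the main obstacle, is (iii). A single bit error on $q_m$ triggers exactly the stabilizers whose support contains $q_m$, each with a value in $\{\omega,\omega^2\}$ fixed by the operator types, and a simultaneous bit error on $(q_a,q_b)$ triggers the entrywise product, in the group $\{1,\omega,\omega^2\}$, of the two single-error syndromes, with possible cancellation to unity. One must therefore show that the greedy loading of the counters $d_h$ together with the explicit tie-break on the final stabilizer $S_6$ makes all seven support patterns distinct, and --- the more delicate point, not covered by support-distinctness alone --- that no qutrit other than $q_i,q_j$ acquires support pattern exactly $\{S_3,S_4\}$ with the operator types that would match the designated pair's eigenvalue pattern, and that no other pair does so under the entrywise-product rule (pairs inside $g_2$ being already excluded by Lemma~\ref{eq:l4}). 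I would discharge this by a short finite case analysis over the placements of $q_i$ and $q_j$, exploiting that the forced identities of part (ii) make $S_3$ (resp.\ $S_4$) the unique stabilizer supporting $q_i$ (resp.\ $q_j$), so that the balanced counters cannot produce a rival $\{S_3,S_4\}$-pattern. Once this is in place, combining (i)--(iii) with Lemma~\ref{eq:l3} and the $\mathcal{O}(1)$ bound completes the proof.
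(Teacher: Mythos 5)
Your proposal is correct and follows essentially the same route as the paper: the paper's proof of Theorem~\ref{eq:bit} simply invokes Algorithm~\ref{alg} together with Lemmata~\ref{eq:l3}, \ref{eq:l4} and \ref{eq:alg}, and your points (i)--(iii) are precisely the content the paper delegates to Lemma~\ref{eq:alg} (commutation via Lemma~\ref{eq:l1}, the signature role of $S_3$ and $S_4$ on $q_i,q_j$, and syndrome uniqueness via the $d$-counters and the $S_6$ tie-break). Your explicit flagging of the syndrome-uniqueness step (iii) as the delicate point is, if anything, more careful than the paper's own treatment, but it does not constitute a different approach.
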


\begin{proof}
The proof follows directly from Algorithm~\ref{alg} and Lemmata~\ref{eq:l3},~\ref{eq:l4} and ~\ref{eq:alg}.
\end{proof}

\section{Comparison of circuit depth and some remarks}
In Table~\ref{tab:comp} we compare the gate cost and the depth of the circuits of the 9-qutrit QECC \cite{PhysRevA.97.052302}, 6-qutrit approximate QECC (AQECC) \cite{majumdar2020approximate}, ternary Steane code, and our proposed QECC. The gate cost is in terms of $C+T$ and Chrestenson gates.

\begin{table}[htb]
    \centering
    \caption{Comparison of gate cost and depth of circuit}
    \begin{tabular}{|c|c|c|c|c|}
    \hline
    \multirow{2}{*}{} & 9-qutrit & 6-qutrit & Ternary & Proposed \\
    & QECC \cite{PhysRevA.97.052302} & AQECC \cite{majumdar2020approximate} & Steane & QECC\\
    \hline
      \# qutrits & 9 & 6 & 7 & 7\\
      \hline
      Gate cost for & \multirow{3}{*}{52} & \multirow{3}{*}{18} & \multirow{3}{*}{12} & \multirow{3}{*}{24}\\
      bit error & & & & \\
      correction & & & & \\
      \hline
      Gate cost for & \multirow{3}{*}{210} & \multirow{3}{*}{20} & \multirow{3}{*}{26} & \multirow{3}{*}{24}\\
      phase error & & & & \\
      correction & & & & \\
      \hline
      Total gate cost & 262 & 38 & 38 & 48 \\
      \hline
      Depth of circuit & 26 & 8 & 8 & 10 \\
      \hline
    \end{tabular}
    \label{tab:comp}
\end{table}

Both (i) the reduction in speed of computation due to error correction, and (ii) the decay in fidelity depend on the depth of the circuit \cite{arute2019quantum}. Although the circuit cost of our QECC is 10 more than that of the ternary Steane code, the increase in the depth of the circuit is only 2. Therefore, our QECC is not expected to have any significant performance degradation in terms of (i) and (ii). Moreover, the ternary Steane code can correct at most a single bit and a single phase error. The 6-qutrit AQECC can correct multiple phase errors, but it cannot correct a single bit error in all possible scenarios. Therefore our QECC surpasses both the 6-qutrit AQECC and the ternary Steane code in its ability to correct errors, without a significant increase in the depth of the circuit.

\begin{thm}
\label{eq:nqutrit}
It is possible to correct upto $n$ simultaneous phase errors for an $n$-qutrit QECC, $n \geq 3$, having stabilizers
\begin{equation*}
    S_1 = \bigotimes_{\substack{i = 0\\i~ is~ even}}^{n} X_i; \hspace{0.3cm} S_2 = \bigotimes_{\substack{j = 0\\j~ is~ odd}}^{n} X_j
\end{equation*}
\end{thm}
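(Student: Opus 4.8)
The plan is to generalize the argument already used for the 7-qutrit code, where Theorem~\ref{eq:phase} followed from Lemmata~\ref{eq:group} and~\ref{eq:group2}. The key structural fact is that $S_1$ and $S_2$ act on disjoint sets of qutrits: $S_1$ acts exactly on the even-indexed qutrits and $S_2$ on the odd-indexed ones, so if we set $g_1 = \{q_i : i~\text{even}\}$ and $g_2 = \{q_j : j~\text{odd}\}$ we have a partition of all $n$ qutrits into two sets with $g_1 \cup g_2 = \{q_0,\dots,q_{n-1}\}$ (or $\{q_0,\dots,q_n\}$, matching the index convention of the statement) and $g_1 \cap g_2 = \emptyset$. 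First I would note that for this to define a genuine QECC one needs $S_1$ and $S_2$ to commute, and since they act on disjoint qutrits they commute trivially — no appeal to Lemma~\ref{eq:l1} is needed here because there is no qutrit on which both act.

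Next I would reprove the analogue of Lemma~\ref{eq:group2} in this general setting. Suppose $m$ phase errors $Z^{(1)},\dots,Z^{(m)}$, each in $\{Z_1,Z_2\}$, occur simultaneously on distinct qutrits all lying in $g_1$. Because every $Z_i$ acts diagonally in the computational basis, the composite operator multiplies each basis state $\ket{x_0 x_1 \cdots}$ by a fixed power $\omega^{q}$ depending only on the codeword — more precisely, because the code satisfies the necessary condition for error correction ($\bra{0_L}\sigma\ket{0_L} = \bra{1_L}\sigma\ket{1_L} = \bra{2_L}\sigma\ket{2_L}$), the combined phase error acts on the entire logical codeword as a single scalar phase times a correctable single-qutrit phase error supported on $g_1$. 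Hence $m$ simultaneous phase errors confined to $g_1$ are indistinguishable from — and correctable as — a single $Z_1$ or $Z_2$ error on one qutrit of $g_1$, detected by the eigenvalue of $S_1$ alone (with $S_2$ reading unity). The same holds for $g_2$ with the roles of $S_1$ and $S_2$ exchanged.

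Then I would reprove the analogue of Lemma~\ref{eq:group}: if phase errors occur on qutrits of \emph{both} sets, collapse the $g_1$-part to an effective single phase error triggering only $S_1$, and the $g_2$-part to an effective single phase error triggering only $S_2$; since $S_1$ and $S_2$ act on disjoint supports, the two syndromes $(\text{eigenvalue of }S_1,\text{eigenvalue of }S_2)$ are read off independently, each uniquely identifying the corresponding effective error from a table analogous to Table~\ref{tab:phase}. An arbitrary phase-error configuration on the codeword decomposes uniquely as (errors in $g_1$) $\otimes$ (errors in $g_2$); applying the two preceding paragraphs, the whole configuration is corrected. Since $g_1$ and $g_2$ together exhaust all $n$ qutrits, this covers up to $n$ simultaneous phase errors, which is the claim.

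The main obstacle is the step in the second paragraph: asserting that an arbitrary product of $Z_1$/$Z_2$ errors on a subset of one part $g_i$ really does act on the logical codeword as a \emph{single} correctable phase error. For the explicit 7-qutrit code this was checked by direct computation (the displayed calculation showing $Z_1^0\ket{0_L} = Z_2^2\ket{0_L}$), but in the general $n$-qutrit setting of Theorem~\ref{eq:nqutrit} no explicit codewords are given — only the stabilizers $S_1, S_2$. I would handle this purely at the stabilizer level: the syndrome of a phase error $E$ is determined by the commutation relations $E S_k = \omega^{c_k} S_k E$, and since each physical $Z_i$ error on a qutrit in $g_1$ commutes with $S_2$ (disjoint support) and contributes a power of $\omega$ when commuted past $S_1$ (by Eq.~\ref{eq:commute} and~\ref{eq:commute2}), the total syndrome of any product of such errors is $(\omega^{c}, 1)$ for some $c \in \{0,1,2\}$ — exactly the syndrome of a single $Z_1$ or $Z_2$ (or $I$) on a $g_1$-qutrit. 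Any two phase-error configurations with the same syndrome differ by an element of the stabilizer or of its centralizer acting trivially on the code, hence are correctable by the same recovery operation; this is precisely the degeneracy mechanism spelled out in the preliminaries, so no explicit codeword is required.
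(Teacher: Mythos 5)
Your overall route is the same as the paper's: the paper proves Theorem~\ref{eq:nqutrit} in three lines by noting that $S_1$ and $S_2$ act on the disjoint sets $g_1$ (even positions) and $g_2$ (odd positions), asserting that the proof of Lemma~\ref{eq:group2} extends verbatim to any $|g_i|$, and combining with Lemma~\ref{eq:group}. Your first three paragraphs reproduce exactly this decomposition (collapse the errors inside each group to one effective single-qutrit phase error, read the two syndromes independently), so up to that point you are matching the paper.

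The genuine gap is in your final paragraph, which is precisely the step you correctly identify as the crux: that an arbitrary product of $Z_1/Z_2$ errors confined to one group is \emph{correctable as} a single phase error. Your syndrome computation is fine --- any $Z$-type product supported on $g_1$ has syndrome $(\omega^{c},1)$ --- but the sentence ``any two phase-error configurations with the same syndrome differ by an element of the stabilizer or of its centralizer acting trivially on the code'' is not valid as stated. Equal syndromes only tell you that the quotient $R^{-1}E$ (error times inverse recovery) commutes with all stabilizers, i.e. lies in the centralizer; centralizer elements outside the stabilizer group are exactly the logical operators and act \emph{nontrivially} on the codespace, so same syndrome does not by itself imply same recovery works. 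Whether $R^{-1}E$ is stabilizer-equivalent to the identity is the degeneracy property, and it cannot be extracted from $S_1,S_2$ alone; it is a property of the full code. The paper settles it for $n=7$ only through the explicit codewords (e.g. the displayed identity $Z_1^0\ket{0_L}=Z_2^2\ket{0_L}$; more concretely, the even- and odd-position substrings of every basis string in $\ket{0_L},\ket{1_L},\ket{2_L}$ have the patterns $(a,2a,a,2a)$ and $(b,2b,b)$ mod $3$, so any $Z$-product on one group multiplies each branch by $\omega^{ta}$ for some fixed $t$, i.e. acts identically to a single $Z_t$ on one qutrit of that group). For general $n$ the theorem tacitly presupposes a code whose codewords have this structure; your instinct that something extra is needed here is right (the paper's own one-line proof, and the proof of Lemma~\ref{eq:group2} itself, are equally terse on this point), but a syndrome-only argument cannot close it --- you must exhibit, or explicitly assume, codewords for which $Z$-products within a group are stabilizer-equivalent to their single-qutrit representatives.
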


\begin{proof}
The stabilizers $S_1$ and $S_2$ partition the qutrits into two disjoint sets $g_1$ and $g_2$. The proof of Lemma~\ref{eq:group2} directly extends to any value of $|g_i|$, $i \in \{1,2\}$. Lemma~\ref{eq:group2}, together with Lemma~\ref{eq:group}, proves this theorem.
\end{proof}

Consider an $n$-qutrit QECC which can correct upto $t$ errors. The number of stabilizers in an $n$-qutrit QECC is $n-1$ \cite{gottesman1997stabilizer}. Theorem~\ref{eq:nqutrit} shows that for any $n \geq 3$, only two stabilizers are sufficient to correct upto $n$ simultaneous phase errors on the codeword. This opens up a rich field of using $n-3$ stabilizers to correct something more than just the $t$ bit errors. For our QECC, where $n = 7$ and $t = 1$, we could correct simultaneous bit errors on pre-defined qutrit pairs.

\begin{thm}
\label{eq:nobinary}
It is not possible to construct a linear degenerate CSS QECC for a qubit system that partitions the qubits into disjoint qubit sets.
\end{thm}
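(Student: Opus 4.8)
The plan is to argue by contradiction with a dimension count on the stabilizer group, exploiting that a qubit CSS code has only one non-trivial bit operator $X$ and one non-trivial phase operator $Z$: unlike the ternary situation of Lemma~\ref{eq:l1}, an $X$-type stabilizer must carry the \emph{same} operator $X$ at every qubit of its support. Suppose such a code existed on $n$ qubits, encoding $k\ge 1$ logical qubits, whose $X$-type stabilizer generators have pairwise disjoint supports $g_1,\dots,g_r$ that together cover all $n$ qubits; then, up to a choice of generators, $S_a=\bigotimes_{i\in g_a}X_i$ with $X_i$ the Pauli $X$ on qubit $i$, and $m_a:=|g_a|$ satisfies $\sum_a m_a=n$. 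Since the code is CSS, the rank of its stabilizer group is $n-k$ and splits as (rank of the $X$-part) $+$ (rank of its $Z$-part $\mathcal{S}_Z$), with the $X$-part of rank at least $r$ because the $S_a$ have disjoint non-empty supports.

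First I would show that in any part $g_a$ with $m_a\ge 2$ all single phase errors collide in syndrome: disjointness of the parts makes a single $Z$ error on $i\in g_a$ anticommute with $S_a$ alone and commute with every other $X$-stabilizer and every $Z$-stabilizer, so $Z_i$ and $Z_j$ share all syndromes for $i,j\in g_a$. Because the code corrects single phase errors (weight-one elements of the error model of Eq.~\ref{eq:model}), the error-correction conditions quoted above force $Z_i^{\dagger}Z_j=Z_iZ_j$ to act as a scalar on the code space; commuting with the entire stabilizer group, $Z_iZ_j$ cannot be a non-trivial logical operator, hence lies in the stabilizer group, and being $Z$-type lies in $\mathcal{S}_Z$ (up to sign) --- in particular such a code is automatically degenerate, consistent with the hypothesis. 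Next I would count ranks: for each $a$ with $m_a\ge 2$ the operators $\{Z_iZ_j:i,j\in g_a\}$ generate all even-weight $Z$-operators supported inside $g_a$, of rank $m_a-1$, and for distinct $a$ these subgroups live on disjoint qubits and are jointly independent, so $\mathrm{rank}\,\mathcal{S}_Z\ge\sum_a(m_a-1)=n-r$. Combining, $n-k\ge r+(n-r)=n$, i.e.\ $k\le 0$, contradicting $k\ge 1$; hence no such code exists.

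I expect the crux to be the middle step: arguing that $Z_iZ_j\in\mathcal{S}_Z$ is forced for \emph{every} pair inside each part (not merely the single pair guaranteed by the degeneracy hypothesis), and ruling out that $Z_iZ_j$ might be a logical operator rather than a stabilizer. This is precisely where being over qubits is essential: in the ternary code the generator $S_1$ carries $X_1$ at $q_0,q_4$ but $X_2$ at $q_2,q_6$, so by Lemma~\ref{eq:l1} the phase-error syndromes inside $g_1$ split into the values $\omega$ and $\omega^2$ (Table~\ref{tab:phase}), which is exactly what prevents the forced syndrome collision --- and hence the rank blow-up --- from occurring, making the ternary construction possible while ruling out any binary analogue.
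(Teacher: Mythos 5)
Your proof is correct, but it takes a genuinely different route from the paper's. The paper argues directly at the level of codewords: linearity places $\ket{0}^{\otimes n}$ inside $\ket{0_L}$, and since the $X$-type stabilizers act on disjoint sets covering all the qubits, their product is $X^{\otimes n}$, so $\ket{1}^{\otimes n}$ would also have to appear in $\ket{0_L}$, which collapses the code. You instead argue at the level of the stabilizer group: the forced syndrome collision of $Z_i$ and $Z_j$ inside a single part, combined with the error-correction (Knill--Laflamme) conditions, pushes every even-weight $Z$-operator supported in a part into the stabilizer, and the rank count $n-k \ge r + (n-r) = n$ yields $k \le 0$. Your version buys rigor and generality: it does not use linearity at all, it replaces the paper's informal final step (that $\ket{0}^{\otimes n},\ket{1}^{\otimes n} \in \ket{0_L}$ is by itself a contradiction) with the quantitative conclusion $k\le 0$, and it isolates exactly why the ternary construction escapes the obstruction --- the $X_1/X_2$ asymmetry in $S_1$, via Lemma~\ref{eq:l1} and Table~\ref{tab:phase}, splits the would-be colliding syndromes. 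The price is one extra hypothesis the paper's proof never invokes: you assume the code corrects all single phase errors (the weight-one $Z$ errors of the model in Eq.~\ref{eq:model}). That assumption is implicit in what ``QECC'' means throughout the paper, so it is a fair reading, but it should be stated explicitly that your argument needs it, whereas the paper's argument instead leans on linearity and on the structure of the logical codewords.
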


\begin{proof}
For an $n$-qubit linear CSS QECC, the codeword $\ket{0}^{\otimes n}$ must be present in the superposition state of the logical qubit. Any QECC which partitions the $n$ qubit codeword into disjoint sets $g_i$ must have the stabilizers $S_i$, $1 \leq i \leq n-1$ operating on disjoint qubits. Moreover, for a CSS code, the stabilizers can be partitioned into two sets, $\mathcal{S}_1$ and $\mathcal{S}_2$, such that stabilizers in $\mathcal{S}_1 \in \{I,\sigma_x\}^{\otimes n}$ and stabilizers in $\mathcal{S}_2 \in \{I,\sigma_z\}^{\otimes n}$. Therefore, the codewords are generated by operating the stabilizers in $\mathcal{S}_1$ multiple times on $\ket{0}^{\otimes n}$.

Let $q_{S_i}$ be the set of codewords which are generated by applying the stabilizer $S_i$ alone on $\ket{0}^{\otimes n}$. Also $\ket{0}^{\otimes n} \in \ket{0_L}$. Then $(\Pi_{i} S_i) \ket{0}^{\otimes n} = \ket{1}^{\otimes n}$, such that $S_i \in \mathcal{S}_1$. This implies that if such a set of stabilizers exist, then both $\ket{0}^{\otimes n}$ and $\ket{1}^{\otimes n} \in \ket{0_L}$. Hence, a QECC, in which the set of stabilizers operate on disjoint set of qubits, cannot exist for qubit systems.
\end{proof}

\section{Conclusion}
In this paper, we have proposed a 7-qutrit degenerate CSS QECC that can correct a single bit error and upto seven simultaneous phase errors. We have also shown that the stabilizers can be generated to correct upto two simultaneous bit errors on a pre-defined pair of qutrits (except for three pairs). Our QECC is optimal in the number of qutrits required to correct a single error with the CSS structure. Moreover, the depth of the circuit of our QECC is only two more than that of the ternary Steane code. We have also shown that this formulation can be extended to design any $n \geq 3$-qutrit QECC, but it is not possible to use this formulation technique to generate QECCs for qubit systems that can correct simultaneous phase errors. Therefore, our proposed code readily shows that there exist design techniques of QECC for the ternary quantum system which are more efficient than the ones which are a simple extension of binary quantum codes.

\section*{Acknowledgement}
All the figures have been generated using IBM qiskit simulator \cite{Qiskit}.
 
\bibliographystyle{unsrt}
\bibliography{main}

\appendix*
\section{Generating stabilizers using Algorithm~\ref{alg}}
Here we show a worked out example of generating the stabilizers for a pre-defined qutrit pair $(q_i,q_j)$ using Algorithm~\ref{alg}. For our example, we assume that the required qutrit pair is $(q_1,q_4)$.

\begin{enumerate}
    \item Initially $d_i = 0$, $\forall$ $i$. For brevity, we represent $d$ as a 7-tuple, all of whose entries are initialized to 0.
    \begin{center}
        $d = \begin{pmatrix}
        0 & 0 & 0 & 0 & 0 & 0 & 0
        \end{pmatrix}$
    \end{center}
    
    \item We shall start with putting $Z_1$ in positions 1 and 4 of stabilizers $S_3$ and $S_4$ respectively.
    \begin{eqnarray*}
    S_3 &=& - \quad Z_1 \quad - \quad - \quad - \quad - \quad -\\
    S_4 &=& - \quad - \quad - \quad - \quad Z_1 \quad - \quad -\\
    S_5 &=& - \quad - \quad - \quad - \quad - \quad - \quad -\\
    S_6 &=& - \quad - \quad - \quad - \quad - \quad - \quad -
    \end{eqnarray*}
    
    \item The tuple $d$ is updated as
    \begin{center}
        $d = \begin{pmatrix}
        0 & 1 & 0 & 0 & 1 & 0 & 0
        \end{pmatrix}$
    \end{center}
    
    \item Our requirement is that $S_3$ and $S_4$ alone will trigger for bit errors on $q_1$ and $q_4$ respectively. Therefore, for every other stabilizers, we put an identity in positions 1 and 4.
    \begin{eqnarray*}
    S_3 &=& - \quad Z_1 \quad - \quad - \quad I \quad - \quad -\\
    S_4 &=& - \quad I \quad - \quad - \quad Z_1 \quad - \quad -\\
    S_5 &=& - \quad I \quad - \quad - \quad I \quad - \quad -\\
    S_6 &=& - \quad I \quad - \quad - \quad I \quad - \quad -
    \end{eqnarray*}
    This step does not change $d$.
    
    \item The two disjoint sets are $g_1 = \{q_0, q_2, q_4, q_6\}$ and $g_2 = \{q_1, q_3, q_5\}$. We shall first fill up the stabilizer positions for qutrits in $g_1$. We need to choose two qutrit positions with minimum $d$ values. Let us select $q_0$ and $q_2$. From Lemma~\ref{eq:l1}, we can either use $Z_1$ on both, or $Z_2$ on both. However, one can verify that if all the four non-identity operators in a stabilizer are $Z_1$, then such a stabilizer will not stabilize the logical qubit. Take $\ket{1}_L$. A stabilizer which has four or three $Z_1$ operators, will produce a non-identity phase for the codeword $\ket{1111111}$ (and for others as well). Therefore, it is mandatory to have 2 $Z_1$ and 2 $Z_2$ operators in each stabilizer. Since we have already put one $Z_1$ operator, the operators in positions 0 and 2 must be $Z_2$.
    \begin{eqnarray*}
    S_3 &=& Z_2 \quad Z_1 \quad Z_2 \quad - \quad I \quad - \quad -\\
    S_4 &=& - \quad I \quad - \quad - \quad Z_1 \quad - \quad -\\
    S_5 &=& - \quad I \quad - \quad - \quad I \quad - \quad -\\
    S_6 &=& - \quad I \quad - \quad - \quad I \quad - \quad -
    \end{eqnarray*}
    The new tuple is $d = \begin{pmatrix}
        1 & 1 & 1 & 0 & 1 & 0 & 0
    \end{pmatrix}$
    
    \item Now if we look into $S_4$ for $g_1$, then here one position in $g_1$ is already filled up. So we shall consider some other position only. From the $d$-tuple, we select position 6. From Lemma~\ref{eq:l1}, we note that the operators in $q_4$ and $q_6$ must be the same in order for it to commute with stabilizer $S_1$. Therefore, the operator at position 6 must also be $Z_1$ (which implies that when we look into the operators for $g_2$ in $S_4$, both the operators must be $Z_2$).
    \begin{eqnarray*}
    S_3 &=& Z_2 \quad Z_1 \quad Z_2 \quad - \quad I \quad - \quad -\\
    S_4 &=& - \quad I \quad - \quad - \quad Z_1 \quad - \quad Z_1\\
    S_5 &=& - \quad I \quad - \quad - \quad I \quad - \quad -\\
    S_6 &=& - \quad I \quad - \quad - \quad I \quad - \quad -
    \end{eqnarray*}
    The new tuple is $d = \begin{pmatrix}
        1 & 1 & 1 & 0 & 1 & 0 & 1
    \end{pmatrix}$
    
    \item If we now look into the set $g_2$ for $S_3$, position 1 is already occupied with a non-identity operator. Therefore, we need to select some other position from $g_2$, say position 3. From Lemma~\ref{eq:l1}, and also from the argument discussed in step 5, the operator at position 3 of $S_3$ should be $Z_1$.
    \begin{eqnarray*}
    S_3 &=& Z_2 \quad Z_1 \quad Z_2 \quad Z_1 \quad I \quad - \quad -\\
    S_4 &=& - \quad I \quad - \quad - \quad Z_1 \quad - \quad Z_1\\
    S_5 &=& - \quad I \quad - \quad - \quad I \quad - \quad -\\
    S_6 &=& - \quad I \quad - \quad - \quad I \quad - \quad -
    \end{eqnarray*}
    The new tuple is $d = \begin{pmatrix}
        1 & 1 & 1 & 1 & 1 & 0 & 1
    \end{pmatrix}$
    
    \item Working out in the above-mentioned way for each stabilizer, we have the final set of stabilizers
    \begin{eqnarray*}
    S_3 &=& Z_2 \quad Z_1 \quad Z_2 \quad Z_1 \quad I \quad I \quad I\\
    S_4 &=& I \quad I \quad I \quad Z_2 \quad Z_1 \quad Z_2 \quad Z_1\\
    S_5 &=& Z_1 \quad I \quad Z_1 \quad Z_2 \quad I \quad Z_2 \quad I\\
    S_6 &=& Z_1 \quad I \quad I \quad Z_2 \quad I \quad Z_2 \quad Z_1
    \end{eqnarray*}
    The new tuple is $d = \begin{pmatrix}
        3 & 1 & 2 & 4 & 1 & 3 & 2
    \end{pmatrix}$
    
\end{enumerate}

If a bit error occurs on $q_1$, then only $S_3$ has a non-identity operator at position 3 and hence only $S_3$ will trigger. Similarly, for bit error on $q_4$ only $S_4$ will trigger. Finally, it can be verified that there exist no other single qutrit or two qutrit bit error for which only $S_3$ and $S_4$ trigger together. Therefore, this set of stabilizers can correct simultaneous bit errors on $q_1$ and $q_4$.

\end{document}